\def\BibTeX{{\rm B\kern-.05em{\sc i\kern-.025em b}\kern-.08em
    T\kern-.1667em\lower.7ex\hbox{E}\kern-.125emX}}
\begin{document}
%\nocite{*}
\sloppy
\title{A Selective Homomorphic Encryption Approach for Faster Privacy-Preserving Federated Learning }

\author{Abdulkadir Korkmaz \and
        Praveen Rao}

\institute{A. Korkmaz \at
              Dept. of Electrical Engineering \& Computer Science, \\
              The University of Missouri, Columbia, USA \\
              \email{ak69t@umsystem.edu}           
           \and
           P. Rao \at
              Dept. of Electrical Engineering \& Computer Science, \\
              The University of Missouri, Columbia, USA \\
              \email{praveen.rao@missouri.edu}
}

%\date{Received: date / Accepted: date}
\maketitle

% As a general rule, do not put math, special symbols or citations
% in the abstract or keywords.
\begin{abstract}
Federated learning (FL) has come forward as a critical approach for privacy-preserving machine learning in healthcare, allowing collaborative model training across decentralized medical datasets without exchanging clients' data. However, current security implementations for these systems face a fundamental trade-off: rigorous cryptographic protections like fully homomorphic encryption (FHE) impose prohibitive computational overhead, while lightweight alternatives risk vulnerable data leakage through model updates. To address this issue, we present FAS (Fast and Secure Federated Learning), a novel approach that strategically combines selective homomorphic encryption, differential privacy, and bitwise scrambling to achieve robust security without compromising practical usability. Our approach eliminates the need for model pretraining phases while dynamically protecting high-risk model parameters through layered encryption and obfuscation. We implemented FAS using the Flower framework and evaluated it on a cluster of eleven physical machines. Our approach was up to 90\% faster than applying FHE on the model weights. In addition, we eliminated the computational overhead that is required by competitors such as FedML-HE and MaskCrypt. Our approach was up to 1.5$\times$ faster than the competitors while achieving comparable security results.

Experimental evaluations on medical imaging datasets confirm that FAS maintains similar security results to conventional FHE against gradient inversion attacks while preserving diagnostic model accuracy. These results position FAS as a practical solution for latency-sensitive healthcare applications where both privacy preservation and computational efficiency are requirements.
\end{abstract}

% Note that keywords are not normally used for peerreview papers.

\keywords{Federated learning \and medical image datasets \and secure \and privacy-preserving \and machine learning}

\section{Introduction}
Federated learning (FL), first introduced by Google in 2016, enables machine learning models to be trained across decentralized datasets stored on distributed devices such as mobile phones~\cite{federated}. This method has gained widespread interest across academia and industry because of its effectiveness in safeguarding data privacy. By keeping training data localized and only sharing model updates, FL avoids direct data exposure, making it particularly valuable in regulated sectors like healthcare. For instance, hospitals can collaboratively train diagnostic models without transferring sensitive patient records, ensuring compliance with privacy regulations.

Nonetheless, the decentralized structure of FL brings about distinct security challenges. While data remains on local devices, the transmission of model parameters over networks creates vulnerabilities, including risks of unauthorized access or data leakage during exchange. Existing research often assumes that data localization inherently guarantees privacy, overlooking these communication-phase threats. Current security strategies in FL, such as encrypting communication channels or perturbing shared gradients, are frequently implemented in isolation, without systematic analysis of their combined impact on computational efficiency, model accuracy, or multi-layered privacy preservation.

Recent advances, such as FedML-HE \cite{jin2024fedmlheefficienthomomorphicencryptionbasedprivacypreserving} and MASKCRYPT \cite{10506637}, attempt to address these challenges through selective encryption. FedML-HE employs gradient sensitivity analysis during pre-training to identify critical weights for HE, but its reliance on client-specific masks introduces aggregation inconsistencies and computational overhead. MASKCRYPT optimizes gradient-guided masks to reduce communication costs, yet its security requires multiple rounds to stabilize, leaving early training phases vulnerable. Both methods incur significant computational penalties—FedML-HE from pre-training and MASKCRYPT from per-round mask recalibration—while struggling to balance real-time efficiency with robust privacy guarantees. These constraints point to the necessity of a cohesive solution that removes the need for initialization phases, reduces computational effort per round, and maintains stable security throughout all FL rounds.

Motivated by the aforementioned reasons, we propose FAS (Fast and Secure Homomorphic Encryption), a novel method designed to address these gaps by unifying cryptology-based security mechanisms tailored for federated architectures. Unlike FedML-HE and MASKCRYPT, FAS eliminates pre-training phases and per-round mask recalibration, instead combining selective homomorphic encryption with noise-and-scrambling mechanisms to secure a fixed percentage of model weights. This ensures consistent privacy guarantees from the first training round while minimizing computational overhead. We present a comparative study of FAS against conventional HE, differential privacy, and state-of-the-art baselines (FedML-HE and MASKCRYPT). Through simulations across cloud-based federated environments and diverse medical imaging datasets (e.g., Kidney, Lung, COVID), we rigorously measure how each method balances privacy protection, computational efficiency, and model utility.

To validate FAS’s efficacy, we conducted extensive experiments across medical imaging tasks and model architectures. Our evaluation shows that FAS offers high efficiency comparable to leading methods, all while preserving strong privacy protections. On datasets like Kidney and COVID, FAS reduced encryption overhead by up to 90\% compared to FHE, with training times as low as 52 minutes for MobileNetV2 (vs. 610 minutes for full encryption). Crucially, FAS eliminated the pre-training and mask recalibration costs inherent to FedML-HE and MASKCRYPT, achieving 1.5×× faster execution than these baselines (e.g., 69 vs. 99 minutes for EffNetB0 on Kidney data). Security evaluations using MSSIM and VIFP confirmed that even at 10\% encryption, FAS effectively thwarts model inversion attacks, with MSSIM scores of 58\%—outperforming FedML-HE (52\%) and MASKCRYPT (55\%) in early-round privacy. FAS’s VIFP scores stabilized within 5 rounds, contrasting with MASKCRYPT’s delayed convergence. This combination of speed and robustness positions FAS as a practical solution for latency-sensitive federated deployments, where traditional methods trade excessive computation for marginal security gains.

    \textbf{Our key contributions are as follows:}
    \begin{itemize}
    \item 
    Implementation of Security Techniques: We implement three security mechanisms—HE~\cite{gentry2009}, differential privacy~\cite{differential}, and the proposed FAS technique—within a FL framework, demonstrating their operational feasibility.
    
    \item Development of FAS: We design a novel Fast and Secure Homomorphic Encryption method that integrates selective encryption with noise injection and bitwise scrambling to enhance security while maintaining computational efficiency.
    
    \item Performance Evaluation: We conduct a comprehensive comparison of encryption techniques using standardized metrics (MSSIM, VIFP) to assess their computational overhead, scalability, and resistance against model inversion attacks.
    
    \item Practical Insights: We characterize the trade-offs between privacy guarantees and system performance, offering implementable guidelines for deploying secure FL in real-world scenarios.
    
    Experimental results demonstrate that FAS achieves superior balance between security and efficiency compared to conventional methods, particularly in latency-sensitive applications like healthcare systems where privacy and responsiveness are critical.
    \end{itemize}

    \section{Background and Related Work}
    Privacy-preserving techniques in FL have gained significant attention due to the critical need to protect sensitive data across distributed devices. FL, initially introduced by McMahan et al.~\cite{mcmahan2017communication}, enables decentralized model training by sharing parameter updates instead of raw data. However, this approach introduces inherent risks of data leakage through exposed model gradients, necessitating robust cryptographic safeguards to prevent adversarial reconstruction of private datasets.
    
    \subsection{Efficiency Enhancements in Homomorphic Encryption}
    HE has emerged as a foundational technology for secure computation on encrypted data. Gentry's pioneering FHE scheme~\cite{gentry2009} first demonstrated the feasibility of arbitrary computations on ciphertexts, though its substantial computational overhead limited practical adoption. Subsequent research has focused on optimizing efficiency:
    
    Fan and Vercauteren's BGV Scheme~\cite{fan2012somewhat} introduced optimizations to the bootstrapping process, enabling faster processing of encrypted data by supporting basic arithmetic functions such as summation and product computations. This advancement significantly improved the practicality of HE for complex computations in FL environments, as further validated by Smart and Vercauteren~\cite{smart2014fully}.
    
    Chillotti et al.'s TFHE Scheme~\cite{chillotti2016faster} achieved faster bootstrapping for secure Boolean operations, making FHE viable for real-time applications requiring rapid iterative computations. Ducas and Micciancio's FHEW Scheme~\cite{ducas2015faster} further accelerated homomorphic operations through bootstrapping refinements, broadening HE's applicability to large-scale distributed systems.
    
    Cheon et al.'s CKKS Scheme~\cite{cheon2017homomorphic} addressed machine learning use cases by supporting approximate arithmetic on encrypted data, prioritizing computational efficiency over exact precision. Brakerski, Gentry, and Vaikuntanathan's BGV+ Scheme~\cite{brakerski2014efficient} introduced noise management optimizations to preserve model accuracy under encryption.
    
    Xu et al.~\cite{xu2019hybridalpha} proposed HybridAlpha, a FL system combining differential privacy with secure aggregation to balance privacy and performance. While HybridAlpha reduces computational overhead compared to pure HE approaches, its reliance on generic privacy mechanisms leaves room for optimizations tailored to federated workflows.
    
    \subsection{Differential Privacy (DP) in Federated Learning}
    Differential Privacy (DP) reduces the risk of sensitive data exposure by perturbing model updates with controlled noise throughout the training phase. First formalized by Dwork et al.~\cite{dwork2006calibrating}, DP was later adapted to deep learning by Abadi et al.~\cite{abadi2016deep}, who developed the DP-SGD algorithm to protect training data. Mironov et al.~\cite{mironov2017renyi} refined these concepts using Rényi differential privacy, providing tighter privacy budget analysis for iterative machine learning processes. Despite these advancements, DP-based methods inherently degrade model utility due to noise injection, particularly in precision-sensitive applications like medical imaging.
    
    \subsection{Selective Encryption in Federated Learning}
    Selective encryption strategies aim to reduce computational overhead by encrypting only sensitive subsets of model parameters. Wu et al.~\cite{wu2019towards} demonstrated that selectively encrypting critical gradients preserves privacy while maintaining computational feasibility. Li et al.~\cite{li2019privacy} and Song et al.~\cite{song2020privacy} expanded this concept with adaptive parameter selection criteria, though their methods require careful tuning to avoid residual vulnerabilities.
    
    Zhang et al.~\cite{Zhang2020} introduced BatchCrypt, a HE framework that batches model updates to reduce communication and computation costs. While BatchCrypt improves efficiency over traditional HE methods, its encryption scope remains inflexible for dynamic FL scenarios. This paper extends prior work by integrating selective encryption with differential noise injection and bitwise scrambling, achieving enhanced privacy with reduced computational overhead.
    
    \subsection{Bitwise Scrambling for Federated Learning Security}
    Bitwise scrambling enhances security by nonlinearly rearranging bits in partially encrypted data using cryptographic keys. Yang et al.~\cite{yang2017bitwise} applied scrambling to secure data transmissions against reconstruction attacks, demonstrating its effectiveness as a lightweight obfuscation layer. Halevi et al.~\cite{halevi2019faster} combined scrambling with HE to resist chosen-ciphertext attacks while maintaining computational efficiency.
    
    The combination of selective encryption, differential privacy, and bitwise scrambling provides a multi-layered defense mechanism for FL systems. This integrated approach balances security robustness with practical efficiency, addressing the resource constraints inherent in distributed edge computing environments.

    \section{Motivation and Threat Model}
    
    \subsection{Motivation}
    Applying FL to sensitive domains like healthcare and finance underscores a key issue: decentralized training by itself is insufficient to stop adversaries from uncovering sensitive patterns through exposed model updates. While FL avoids raw data centralization, empirical studies confirm that transmitted gradients retain sufficient information for model inversion attacks to reconstruct patient scans, financial transactions, or other identifiable records—even when training adheres to protocol.
    
    Existing privacy mechanisms force practitioners into suboptimal trade-offs. FHE provides cryptographically rigorous confidentiality but imposes prohibitive computational latencies, rendering it impractical for real-time medical diagnostics or high-frequency trading systems. Differential privacy (DP), while computationally lightweight, irreversibly corrupts model updates with statistical noise, degrading diagnostic accuracy in oncology imaging or fraud detection below clinically/operationally viable thresholds.
    
    Our FAS framework addresses this dichotomy through context-aware privacy stratification. By selectively encrypting only high-risk parameters (e.g., gradients correlating with identifiable features), applying tunable DP noise to less sensitive components, and augmenting both with bitwise scrambling, FAS maintains diagnostic-grade model utility while eliminating cryptographic overheads associated with full-model FHE. Crucially, this multi-tiered protection operates without requiring precomputed sensitivity masks or auxiliary pretraining phases, ensuring seamless integration into latency-constrained FL pipelines for MRI analysis, genomic prediction, and other precision-sensitive applications.

    \subsection{Threat Model}
    
    FL systems are susceptible to privacy threats due to decentralized training and the exchange of model updates. A semi-honest attacker may adhere to the protocol while still trying to extract confidential information from observed or compromised data.
    
    Model updates shared between clients and the central server may be intercepted by attackers. Common attack vectors include gradient inversion to reconstruct training data, membership inference to verify specific data participation, and statistical analysis to infer private attributes. While SSL encryption secures transmissions, it does not prevent adversaries from analyzing transmitted updates.
    
    To mitigate these risks, our approach integrates selective encryption, differential noise addition, and bitwise scrambling. Critical model parameters are encrypted, while unencrypted data is obfuscated with noise and scrambling to prevent reconstruction. Secure aggregation further ensures data protection throughout training, providing strong defenses against privacy attacks without incurring excessive computational overhead. This layered strategy effectively balances security and efficiency, making it well-suited for sensitive applications like healthcare.

\section{Proposed Model}

This research presents FAS (Fast and Secure Federated Learning), a novel privacy-preserving scheme designed to reduce the computational cost of encryption while maintaining strong protection against privacy threats. The model integrates three lightweight techniques— selective encryption, differential noise addition, and bitwise scrambling along with HE—into the FL process. This multi-pronged approach enhances security without the heavy overhead typically associated with FHE or the accuracy degradation often introduced by differential privacy when used in isolation.

\subsection{Selective Encryption}

Instead of encrypting all model parameters, FAS applies HE only to a fixed percentage of the weights, selected uniformly. This subset is treated as the most sensitive part of the model. The encryption is performed directly on these weights using a homomorphic scheme, allowing the server to carry out aggregation without requiring decryption. This design choice retains the advantages of HE while reducing the time and resource demands substantially, achieving up to 90\% reduction in overhead compared to full encryption.

\subsection{Differential Noise}

To reinforce privacy for the remaining (unencrypted) parameters, the model introduces controlled random noise. This noise follows a Laplace distribution and is calibrated using differential privacy principles. While FAS does not rely solely on differential privacy for security, this mechanism makes individual contributions harder to isolate, especially in low-sensitivity weights. By blending noise into less critical parameters, the model increases resistance to membership inference attacks with minimal impact on performance.

\subsection{Bitwise Scrambling}

FAS further obfuscates unencrypted data through bitwise scrambling. A lightweight cryptographic key is used to permute the bits of the non-encrypted parameters. This process disrupts patterns and statistical properties that could be exploited by adversaries. When combined with noise, scrambling ensures that unencrypted parts of the model are still resistant to reconstruction attacks—even in the presence of partial knowledge or auxiliary data.

\subsection{Federated Learning Integration}

FAS is designed to integrate directly into standard FL workflows. At each round, clients:
\begin{enumerate}
    \item Train their local models on private data.
    \item Encrypt a portion of the model weights.
    \item Apply scrambling and noise to the rest.
    \item Transmit the transformed updates to the server.
\end{enumerate}
The server merges the incoming gradients using techniques like weighted averaging. Following this, a scrambling key is applied to the aggregated model before it is sent back to the clients, where it is decrypted and used for continued training.

\subsection{Security and Efficiency Trade-Off}

The proposed system achieves a balance between security and speed:
\begin{itemize}
    \item \textbf{Security:} FAS protects sensitive parameters through encryption while mitigating risks to the remaining ones through obfuscation. This layered defense shows comparable robustness to fully encrypted models in resisting attacks like model inversion and membership inference.
    \item \textbf{Efficiency:} By reducing the encryption scope and avoiding pre-processing phases like sensitivity masking, FAS cuts down overhead significantly. It is particularly well-suited for settings with limited computational power.
\end{itemize}

By integrating selective encryption, noise injection, and scrambling, this approach offers a practical and deployable solution for secure FL in privacy-sensitive areas, including medical and financial domains.

\subsection{Privacy-Preserving Techniques}

\subsubsection{Homomorphic Encryption}
Homomorphic encryption (HE) is a cryptographic scheme that permits data processing to occur while the data remains encrypted, eliminating the need for decryption during computation.

In the context of FL, HE enables secure aggregation of model updates from distributed clients. The mathematical foundation of HE ensures that:
\[
\text{Dec}(\text{Eval}(\text{Enc}(x), \text{Enc}(y))) = f(x, y),
\]
where $\text{Enc}$ and $\text{Dec}$ refer to the encryption and decryption functions, respectively, and $\text{Eval}$ denotes a computation performed on ciphertexts. This capability is essential for privacy-preserving scenarios, as it ensures that plaintext data remains inaccessible to potential adversaries. \cite{gentry2009}.

\subsubsection{Differential Privacy}
Differential Privacy (DP) is a privacy framework that adds carefully calibrated noise to statistical outputs, ensuring that individual data points have minimal impact on the overall resul. 

A mechanism $M$ is said to ensure $\epsilon$-differential privacy if, for all subsets of outputs $O$ and for any pair of datasets $D_1$ and $D_2$ that differ by only one record, the following holds:
\[
\frac{\Pr[M(D_1) \in O]}{\Pr[M(D_2) \in O]} \leq e^{\epsilon},
\]
where $\epsilon$ represents the privacy parameter.  \cite{dwork2006calibrating}.

In FL, DP adds noise to model updates or gradients prior to aggregation, shielding individual data contributions while retaining overall model performance. This makes DP particularly useful for safeguarding privacy in high-stakes sectors like medical and financial applications.

\subsection{Process Flow of the Proposed Model}
The proposed model follows a three-stage pipeline: client-side processing, server-side processing, and client post-processing. Each phase includes well-defined steps illustrated with pseudocode and supported by visual diagrams.

\begin{figure}[!t]
    \centering
    \includegraphics[width=\linewidth]{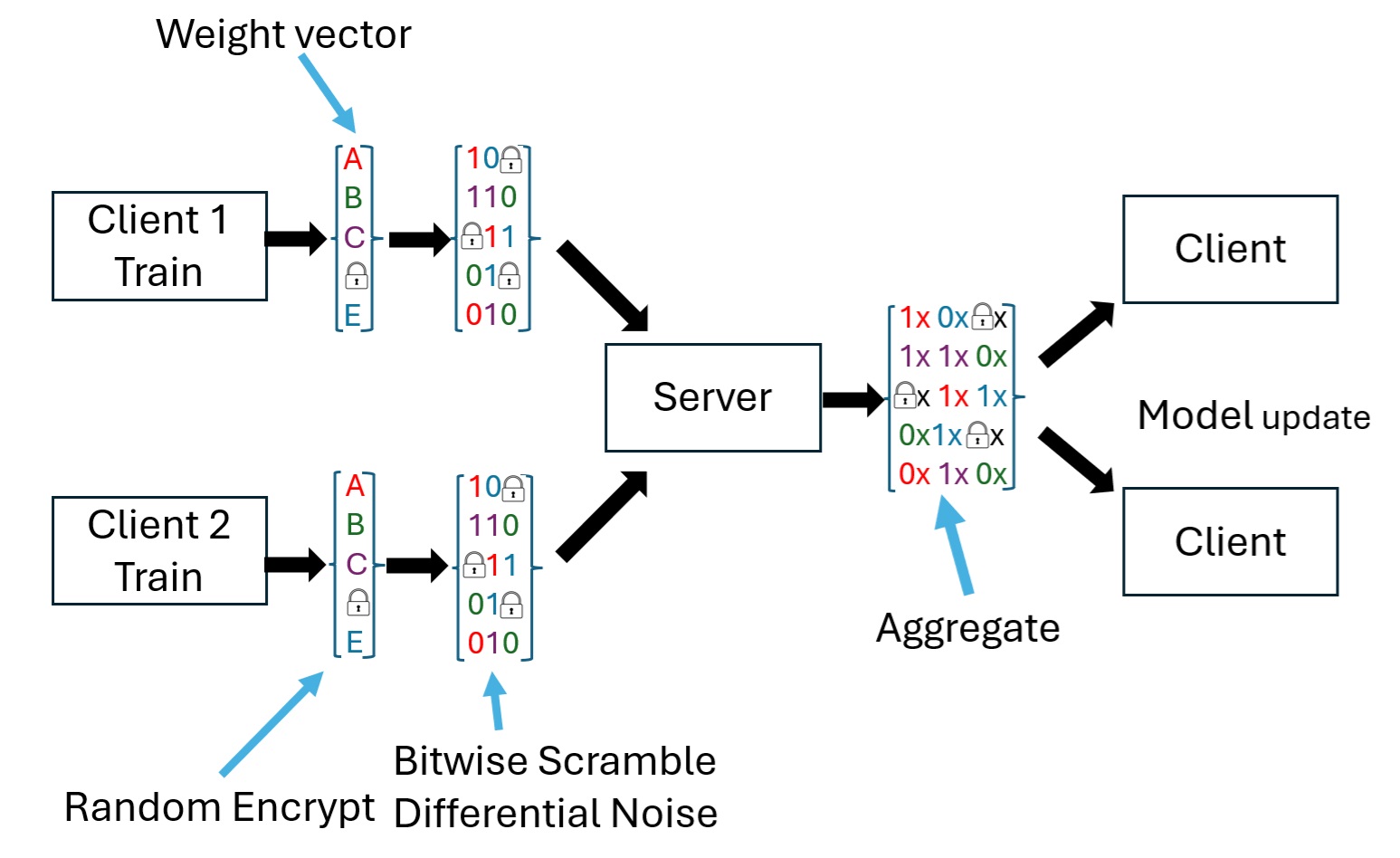}
    \caption{General depiction of the FL model showing the processes of encryption, scrambling, and aggregation.}
    \label{fig-neural-network}
\end{figure}
Figure \ref{fig-neural-network} provides a general depiction of our FL model, demonstrating the encryption, scrambling, and noise addition steps across client nodes, and the subsequent aggregation performed by the server.
 
\begin{figure}[!t] \centering \includegraphics[width=0.45\textwidth]{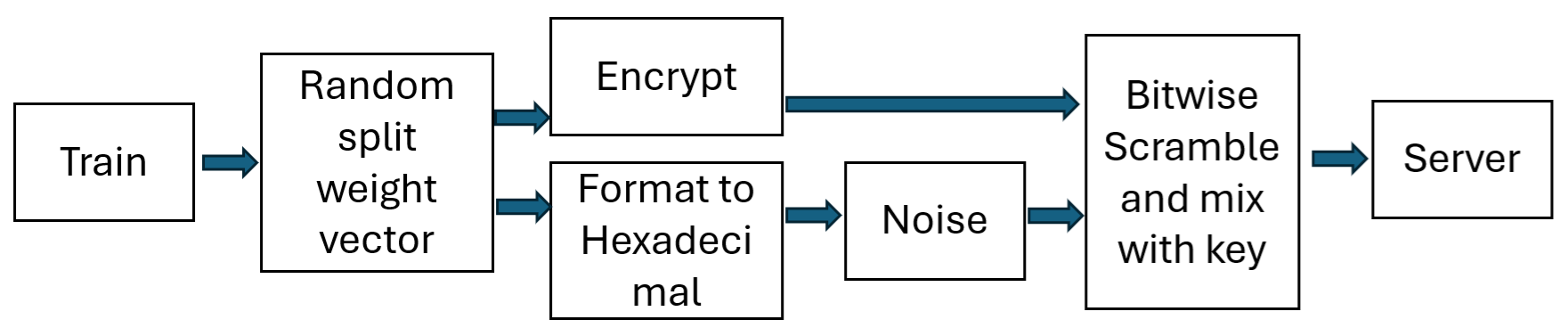} \caption{Client-side process in FL system: encryption, scrambling, and noise addition.} \label{fig-model1re3}
\end{figure}

\begin{figure}[!t] \centering \includegraphics[width=0.45\textwidth]{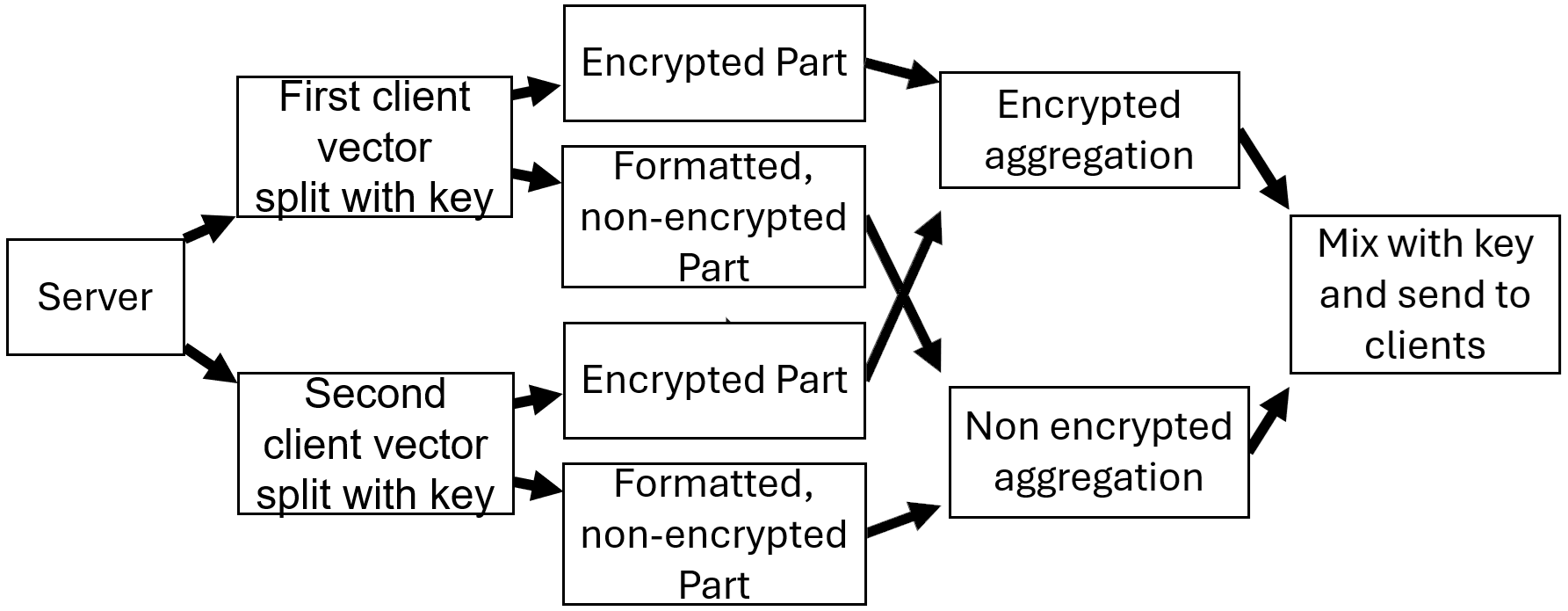} \caption{Server-side process in FL system: aggregation and re-scrambling.} \label{fig-model2re3} 
\end{figure}

Figure~\ref{fig-neural-network} presents a general overview of the FL system, encompassing client-side encryption, scrambling, noise addition, and server-side aggregation. Figures~\ref{fig-model1re3} and~\ref{fig-model2re3} illustrate the detailed client and server sides workflows, respectively.

The process begins at the client side, where local data is utilized to train a model, producing a weight vector. This weight vector is split, encrypted, and further obfuscated through bitwise scrambling and the addition of noise.

Once encrypted, the weight vectors are formatted to hexadecimal and transmitted to the server. The server subsequently collects encrypted model data from all clients and applies federated averaging (FedAvg) to combine the parameters. Importantly, the aggregation is done on both encrypted and non-encrypted vectors, ensuring privacy is maintained while also allowing for more efficient computation.

After aggregation, the server mixes the results with a scrambling key before sending the updated parameters back to the clients, where they are decrypted and applied to the local models for further training.

This privacy-aware and secure approach facilitates effective model training on decentralized data, while safeguarding sensitive information from unauthorized access.

\subsubsection{Client-Side Processing}
In the client-side processing stage, model weights ($w$) are prepared for secure transmission to the server by encrypting selected weights, obfuscating the remaining weights, and combining them into a single data structure. Initially, the weights are split into two subsets based on a predefined encryption percentage ($enc\_pct$) (Lines 6--8). The selected portion of the weights is encrypted using a HE function ($enc\_func$), and these encrypted weights are stored in $enc\_w$ (Lines 9--11). For the remaining weights, they are first formatted to resemble encrypted data ($fmt\_weight$), noise is added using $noise\_func$, and the noisy weights are scrambled with a cryptographic key ($scr\_key$) via a scrambling function ($scr\_func$) (Lines 12--16). These scrambled weights are stored in $scr\_w$. Finally, the encrypted and scrambled weights are combined into a single structure ($enc\_scr\_w$) (Line 17) and transmitted to the server (Line 18).

\begin{algorithm}
\caption{Client-Side Enc, Scrambling, and Noise Addition}
\begin{algorithmic}[1]
\Require
\State $w$: Model weights
\State $enc\_pct$: Percentage of weights to encrypt
\State $enc\_func$: Encryption function
\State $scr\_key$: Scrambling key
\State $scr\_func$: Scrambling function
\State $noise\_func$: Noise addition function
\Ensure 
\State $enc\_scr\_w$: Processed (encrypted and scrambled) weights
\Procedure{Client Enc-Scr-Noise}{}
\State Split $w$ into encrypted and non-encrypted parts
\State $n \gets$ length of $w$
\State $num\_enc \gets (enc\_pct / 100) * n$
\State $enc\_w \gets []$
\State $scr\_w \gets []$

\For{$i = 0$ to $num\_enc - 1$}
    \State $enc\_weight \gets enc\_func(w[i])$
    \State Append $enc\_weight$ to $enc\_w$
\EndFor

\For{$i = num\_enc$ to $n - 1$}
    \State $fmt\_weight \gets$ format\_as\_encrypted($w[i]$)
    \State $noisy\_weight \gets noise\_func($fmt\_weight$)$
    \State $scr\_weight \gets scr\_func($noisy\_weight$, $scr\_key$)$
    \State Append $scr\_weight$ to $scr\_w$
\EndFor

\State Combine $enc\_w$ and $scr\_w$ into $enc\_scr\_w$
\State Send $enc\_scr\_w$ to the server
\EndProcedure
\end{algorithmic}
\end{algorithm}

\subsubsection{Server-Side Processing}
On the server side, the received combined weights ($enc\_scr\_w$) are processed to aggregate updates securely without decryption. The scrambled weights are first unscrambled using the same cryptographic key ($scr\_key$) applied during client-side processing (Lines 5--7). Once unscrambled, these weights are aggregated using a processing function ($srv\_proc\_func$) (Lines 8--9). Homomorphic operations are applied directly on the encrypted weights without decryption (Line 9). After processing, the unscrambled weights are reformatted and scrambled again using $scr\_func$ and $scr\_key$ to preserve security (Lines 10--13). These processed and re-scrambled weights are then combined with the processed encrypted weights and sent back to the client (Line 14).

\begin{algorithm}

\caption{Server-Side Processing without Decryption}
\begin{algorithmic}[1]
\Require 
\State $enc\_scr\_w$: Received weights from client
\State $scr\_key$: Scrambling key for re-scrambling
\State $srv\_proc\_func$: Server processing function
\Ensure 
\State $sec\_w$: Processed weights sent back to client
\Procedure{Server Proc w/o Decrypt}{}
\State Receive $enc\_scr\_w$ from the client
\State $proc\_w \gets []$
\State $unscr\_w \gets []$

\For{$w$ in $scr\_w$}
    \State $unscr\_weight \gets unscramble\_func(w, scr\_key)$
    \State Append $unscr\_weight$ to $unscr\_w$
\EndFor

\State Process $unscr\_w$ using $srv\_proc\_func$
\State Process $enc\_w$ using homomorphic operations (no decryption needed)

\State Format $unscr\_w$ back to scrambled format:
\For{$w$ in $unscr\_w$}
    \State $re\_scr\_w \gets scr\_func(w, scr\_key)$
    \State Append $re\_scr\_w$ to $proc\_w$
\EndFor

\State Combine $proc\_enc\_w$ with $re\_scr\_w$
\State Send $sec\_w$ back to the client
\EndProcedure
\end{algorithmic}
\end{algorithm}

\subsubsection{Client Post-Processing}
In the final stage, the client receives the processed weights ($sec\_w$) from the server (Line 2) and restores them for updating the local model. The encrypted subset of weights is decrypted using the appropriate decryption function (Lines 3--4), while the scrambled weights are unscrambled with the cryptographic key ($scr\_key$) to restore their original form (Lines 5--7). These two subsets of weights are then combined to reconstruct the full model weights (Line 8), which are subsequently used to refine the model instance residing on the client for the next round of federated learning. (Line 9).

\begin{algorithm}
\vspace{-1mm}
\caption{Client Post-Processing with Noise Restoration}
\begin{algorithmic}[1]
\Require 
\State $sec\_w$: Weights received from server
\State $scr\_key$: Scrambling key for restoration
\Ensure 
\State $final\_w$: Updated model weights for next round
\Procedure{Client Post-Proc with Noise Rest}{}
\State Receive $sec\_w$ from the server
\State $proc\_enc\_w \gets sec\_w[0 : num\_enc]$
\State $proc\_scr\_w \gets sec\_w[num\_enc : ]$

\For{$w$ in $proc\_scr\_w$}
    \State $restored\_w \gets reverse\_scr\_func(w, scr\_key)$
    \State Append $restored\_w$ to $rest\_w$
\EndFor

\State Combine $proc\_enc\_w$ and $rest\_w$ to form $final\_w$
\State Proceed to next round of FL with $final\_w$
\EndProcedure
\end{algorithmic}
\end{algorithm}

\subsection{Advantages of the Proposed Model}

\begin{itemize}
    \item \textbf{Enhanced Security:} The combination of encryption, noise, and scrambling protects both encrypted and unencrypted data.
    \item \textbf{Efficiency:} Reduces computational overhead by selectively encrypting only a portion of the weights.
    \item \textbf{Scalability:} Optimized for large-scale FL systems with multiple clients.
\end{itemize}

\subsection{Privacy Analysis}

This section presents a provide a formal analysis demonstrating that our FL mechanism—integrating selective homomorphic encryption, differential privacy, and bitwise scrambling—upholds a clear and measurable differential privacy guarantee.

\subsubsection{Setup and Definitions}

We consider a FL setting with $n$ clients, each holding private data. The global model is described by parameters indexed by $[N]$. Let $S \subseteq [N]$ be the subset of parameters to be protected by selective homomorphic encryption , and let $[N]\setminus S$ be the remaining parameters to which we add differential privacy (DP) noise.

\paragraph{Differential Privacy.}
A mechanism $M$ satisfies $\epsilon$-differential privacy if, for any neighboring datasets $D$ and $D'$ (differing in one element), and for every measurable set of outputs $O$, it holds that:

\[
\frac{\Pr[M(D)\in O]}{\Pr[M(D')\in O]} \le e^\epsilon.
\]\cite{dwork2006calibrating}

\paragraph{Selective Homomorphic Encryption.}
Parameters in $S$ are encrypted using a semantically secure HE scheme. Semantic security ensures no polynomial-time adversary can distinguish ciphertexts of different messages, implying no additional privacy cost in a DP sense (0-DP).\cite{gentry2009}

\paragraph{Differential Privacy on the Remaining Parameters.}
For each $i \in [N]\setminus S$, we add noise calibrated to $\epsilon_i$-DP. By the composition property of DP, releasing all these parameters together is $\left(\sum_{i \in [N]\setminus S}\epsilon_i\right)$-DP.\cite{dwork2006calibrating}

\paragraph{Bitwise Scrambling.}
We define a scrambling function $\mathcal{T}_k:\mathcal{X} \to \mathcal{X}$, where $\mathcal{X}$ is the space of possible model outputs. $\mathcal{T}_k$ is deterministic, keyed by $k$, and independent of $D$ except through $M(D)$. This scrambling is a form of post-processing.\cite{Dwork2006}

\subsubsection{Privacy Guarantee}

\begin{theorem}

Consider a mechanism $\mathcal{M}$ that on input dataset $D$:
\begin{enumerate}
    \item Encrypts $W_i$ for $i \in S$ using a semantically secure HE scheme \cite{gentry2009,Fontaine2007}.
    \item Adds noise to each $W_j$ for $j \in [N]\setminus S$ to achieve $\epsilon_j$-DP. Releasing all noisy parameters together is $\left(\sum_{j \in [N]\setminus S}\epsilon_j\right)$-DP \cite{Dwork2014}. 
    \item Applies a scrambling function $\mathcal{T}_k$ to the entire output, resulting in $\mathcal{M}'(D)=\mathcal{T}_k(\mathcal{M}(D))$ \cite{Dwork2006}.
\end{enumerate}
Then, $\mathcal{M}'(D)$ is also $\left(\sum_{j \in [N]\setminus S}\epsilon_j\right)$-DP.
\end{theorem}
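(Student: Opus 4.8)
The plan is to decompose $\mathcal{M}'$ into three well-understood privacy primitives and then recompose their guarantees. First I would view the intermediate mechanism $\mathcal{M}$ as the joint release of two sub-mechanisms that both act on $D$: an encryption sub-mechanism $M_{\mathrm{enc}}$ producing the ciphertexts $\{W_i\}_{i\in S}$, and a noise sub-mechanism $M_{\mathrm{noise}}$ producing the perturbed parameters $\{W_j\}_{j\in[N]\setminus S}$. For $M_{\mathrm{enc}}$ I would argue that semantic security of the HE scheme forces its output distribution to be (computationally) independent of the underlying plaintext, so that for any neighboring $D,D'$ the likelihood ratio of encrypted outputs is bounded by $e^0=1$; that is, $M_{\mathrm{enc}}$ contributes $0$-DP. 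For $M_{\mathrm{noise}}$ the statement already grants, via the composition property cited earlier, that releasing all noisy parameters together is $\bigl(\sum_{j\in[N]\setminus S}\epsilon_j\bigr)$-DP.

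Next I would invoke basic (sequential) composition. Since both sub-mechanisms consume the same dataset $D$, the mechanism outputting the pair $\bigl(M_{\mathrm{enc}}(D),M_{\mathrm{noise}}(D)\bigr)$, which is exactly $\mathcal{M}(D)$, is $\bigl(0+\sum_{j\in[N]\setminus S}\epsilon_j\bigr)$-DP, i.e.\ $\bigl(\sum_{j\in[N]\setminus S}\epsilon_j\bigr)$-DP. This reduces the claim to showing that the terminal scrambling step does not inflate the budget.

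The final step is to apply the post-processing invariance of differential privacy. Because $\mathcal{T}_k$ is deterministic, keyed by a $k$ drawn independently of $D$, and depends on the data only through $\mathcal{M}(D)$, it is admissible post-processing. Writing $\mathcal{M}'=\mathcal{T}_k\circ\mathcal{M}$, for every measurable set $O$ one has $\Pr[\mathcal{T}_k(\mathcal{M}(D))\in O]=\Pr[\mathcal{M}(D)\in\mathcal{T}_k^{-1}(O)]$, so the $\epsilon$-DP bound on $\mathcal{M}$ transfers verbatim to $\mathcal{M}'$, yielding the stated $\bigl(\sum_{j\in[N]\setminus S}\epsilon_j\bigr)$-DP guarantee.

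I expect the main obstacle to be the first step: reconciling the \emph{computational} notion of semantic security with the \emph{information-theoretic} definition of $\epsilon$-DP. Strictly, a semantically secure ciphertext only guarantees computational indistinguishability, so the clean ``$0$-DP'' contribution of $M_{\mathrm{enc}}$ holds exactly only under an idealized perfectly-hiding encryption model; otherwise it should be phrased as computational DP carrying a negligible slack term. I would therefore be careful either to posit an ideal HE oracle or to state the encryption contribution as computational $0$-DP up to $\mathrm{negl}$, and to verify explicitly that the scrambling key's independence from $D$ is genuinely used, since without it $\mathcal{T}_k$ would no longer qualify as post-processing.
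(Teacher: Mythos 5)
Your proposal is correct and follows essentially the same three-step route as the paper's own proof: composition over the noisy coordinates giving $\bigl(\sum_{j\in[N]\setminus S}\epsilon_j\bigr)$-DP, semantic security of the HE layer contributing $0$-DP, and post-processing invariance handling the scrambling $\mathcal{T}_k$. Your closing caveat --- that semantic security is only \emph{computational}, so the encrypted part is strictly $0$-DP only up to a negligible slack (i.e., computational DP) --- is a point the paper silently elides, and your treatment is actually the more careful of the two.
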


\begin{proof}
\noindent \textbf{Step 1 (DP on $[N]\setminus S$):}
For $[N]\setminus S$, each parameter $W_j + \text{noise}_j$ is $\epsilon_j$-DP. By composition, releasing all these parameters is $\sum_{j \in [N]\setminus S}\epsilon_j$-DP \cite{Dwork2014}. Formally, for any adjacent $D,D'$ and event $O$:
\[
\frac{\Pr[\mathcal{M}_{[N]\setminus S}(D)\in O]}{\Pr[\mathcal{M}_{[N]\setminus S}(D')\in O]} \le e^{\sum_{j \in [N]\setminus S}\epsilon_j\cite{Dwork2014}}.
\].

\noindent \textbf{Step 2 (Selective Encryption on $S$):}
Parameters in $S$ are encrypted with a semantically secure HE scheme. Without the secret key, no information about the plaintext is revealed, contributing 0-DP cost \cite{gentry2009,Fontaine2007}.

\noindent \textbf{Step 3 (Scrambling as Post-Processing):}
The scrambling function $\mathcal{T}_k$ operates as a deterministic transformation that does not depend on the underlying data. A core principle of differential privacy is that its guarantees are preserved under any data-independent transformation. Specifically, if a mechanism $\mathcal{M}$ satisfies $\epsilon$-DP, then so does $f(\mathcal{M})$ for any deterministic function $f$~\cite{Dwork2006,Dwork2014}.

Combining these results:
\[
\frac{\Pr[\mathcal{M}'(D)\in O]}{\Pr[\mathcal{M}'(D')\in O]} = \frac{\Pr[\mathcal{T}_k(\mathcal{M}(D))\in O]}{\Pr[\mathcal{T}_k(\mathcal{M}(D'))\in O]} \le e^{\sum_{j \in [N]\setminus S}\epsilon_j}.
\]

Thus, $\mathcal{M}'(D)$ maintains the same DP guarantee as $\mathcal{M}(D)$.
\end{proof}

\section{Evaluation}

This section evaluates our FL setup using various security techniques, including HE, differential privacy, and our proposed FAS. We assess their impact on test accuracy, communication cost, computational overhead, and privacy preservation.

Each dataset was divided into 10 equal parts and assigned to separate machines operating as Flower~\cite{flower} clients. Each client trained its local model over 10 rounds, performing 20 epochs per round. A validation set, comprising 10\% of the training data, was used to maintain consistency, while final accuracy was measured on an independent test set. All models were trained using CPU resources to simulate realistic constraints.

\subsection{Datasets Used in Experiments}

We selected publicly available datasets covering diverse imaging contexts with high usability ratings. Our criteria included broad disease coverage, diverse imaging techniques, and sufficient training and validation data.

\subsubsection{CIFAR-10 Dataset~\cite{CIFAR10}} A benchmark for image classification, CIFAR-10 consists of 60,000 32x32 images across 10 classes. It includes 50,000 training and 10,000 test images, commonly used for model performance evaluation.

\subsubsection{Chest X-ray (COVID-19, Pneumonia) Dataset~\cite{ChestXray}} This dataset contains 6,339 grayscale X-ray images across three classes: benign, COVID-19, and pneumonia (2,313 images per class). These images aid in lung condition assessment.

\begin{figure}[!t]
    \centering
    \begin{tabular}{ccc}
    \includegraphics[scale=0.35]{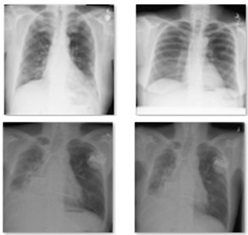} &
    \includegraphics[scale=0.35]{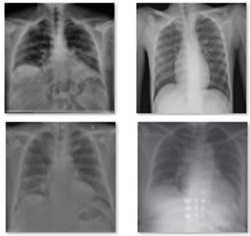} &
    \includegraphics[scale=0.35]{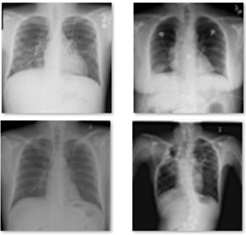} \\
    (a) Benign & (b) COVID-19 & (c) Pneumonia
    \end{tabular}
    \caption{Examples from the Chest X-ray (Covid, Pneumonia) dataset~\cite{mine}}
    \label{fig-dataset-Xray}
\end{figure}

\subsubsection{CT Kidney Dataset~\cite{kidney}} Comprising 12,446 CT scans categorized as benign (5,077), cyst (3,709), stone (1,377), and tumor (2,283), this dataset aids in kidney disease diagnosis.

\begin{figure}[!t]
    \centering
    \begin{tabular}{cc}
    \includegraphics[scale=0.6]{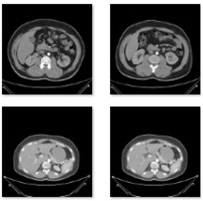} &
    \includegraphics[scale=0.6]{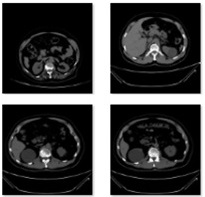} \\
    (a) Benign & (b) Cyst \\
    \includegraphics[scale=0.6]{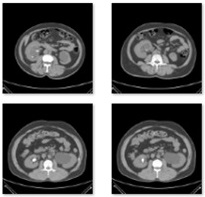} &
    \includegraphics[scale=0.6]{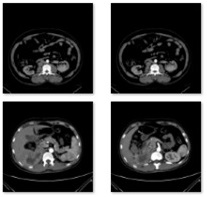} \\
    (c) Stone & (d) Tumor
    \end{tabular}
    \caption{Examples from the CT Kidney dataset~\cite{mine}}
    \label{fig-dataset-kidney}
\end{figure}

\subsubsection{Diabetic Retinopathy Dataset~\cite{diabetic}} This dataset comprises 88,645 fundus images, categorized as benign (65,342) or diabetic retinopathy (23,303). Fundus imaging assists in detecting optic nerve abnormalities.

\begin{figure}[!t]
    \centering
    \begin{tabular}{cc}
    \includegraphics[scale=0.4]{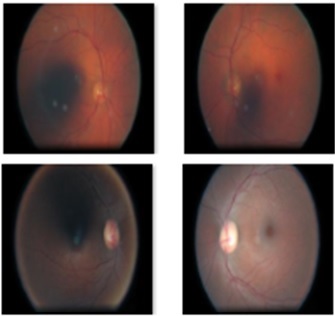} &
    \includegraphics[scale=0.4]{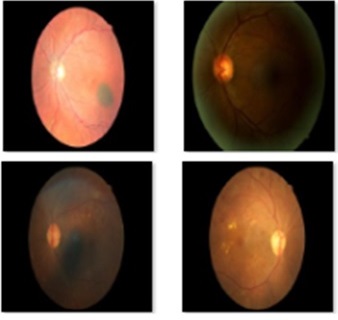} \\
    (a) Benign & (b) Diabetic retinopathy
    \end{tabular}
    \caption{Examples from the Diabetic Retinopathy Detection dataset~\cite{mine}}
    \label{fig-dataset-diabetic}
\end{figure}

\subsubsection{Lung Cancer Histopathological Images~\cite{lungcolon}} This dataset contains 15,000 histopathological images, equally distributed across benign, squamous cell carcinoma (SCC), and adenocarcinoma (ACC) classes.

\begin{figure}[!t]
    \centering
    \begin{tabular}{ccc}
    \includegraphics[scale=0.35]{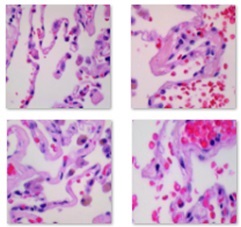} &
    \includegraphics[scale=0.35]{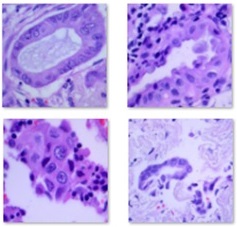} &
    \includegraphics[scale=0.35]{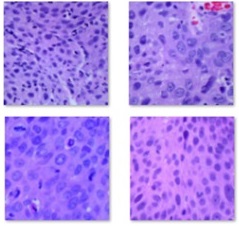} \\
    (a) Benign & (b) ACC & (c) SCC
    \end{tabular}
    \caption{Examples from the Lung Cancer Histopathological Images dataset.~\cite{mine}}
    \label{fig-dataset-lung}
\end{figure}

\section{Experimental Setup}

This section outlines the experimental framework employed to assess different cryptographic methods within a FL environment applied to medical imaging datasets.

\subsection{CloudLab Environment}
Experiments were conducted on CloudLab~\cite{cloudlab}, a cloud computing testbed for systems research. The experimental setup consisted of 11 standalone physical machines in a shared-nothing cluster, interconnected through 10 Gbps Ethernet. Each machine featured dual Intel E5-2683 v3 CPUs (14 cores at 2.00 GHz), 256 GB of RAM, and a pair of 1 TB hard drives, offering ample computational capacity for FE workloads.

\subsection{Federated Learning Framework: Flower}
Flower~\cite{flower}, an open-source FL framework, was used for model training and evaluation, supporting TensorFlow~\cite{tensorflow}, PyTorch~\cite{pytorch}, and MXNet~\cite{mxnet}. One CloudLab machine served as the Flower server, while the remaining 10 functioned as Flower clients.

\subsection{Federated Learning Setup}
Each dataset was divided into 10 equal partitions, distributed across the Flower clients. The centralized FL setup comprised 10 clients and a server. Each model was trained over 10 rounds, with 20 epochs per round. A validation set (10\% of training data) monitored performance, and final accuracy was evaluated on a test set. All training used CPUs to simulate resource-constrained environments.

\subsection{Rationale for Centralized Federated Learning}
Decentralized FL was not used due to a lack of mutual trust among clients. A centralized setup ensures data privacy, as clients communicate only with the server, avoiding direct data exchange and aligning with confidentiality requirements.

\subsection{Deep Learning Models}
We evaluated four deep learning models:

\subsubsection{ResNet-50\cite{resnet}}
ResNet-50 mitigates the vanishing gradient problem with skip connections, facilitating deep network training. 

\subsubsection{DenseNet121\cite{densenet}}
DenseNet121 enhances feature reuse by connecting each layer to all previous layers, improving efficiency in resource-constrained settings.

\subsubsection{EfficientNetB0\cite{efficientnet}}
EfficientNetB0 optimally scales network width, depth, and resolution, balancing accuracy and efficiency.

\subsubsection{MobileNet V2\cite{mobilenet}}
MobileNet V2 leverages efficient convolutional operations to minimize computational overhead while still delivering strong predictive performance, making it well-suited for deployment on mobile and embedded devices.

\subsection{Encryption Configurations}
We compared three encryption configurations:
\begin{itemize}
    \item \textbf{Non-Enc}: No encryption applied.
    \item \textbf{Full-Enc}: All model weights encrypted using HE.
    \item \textbf{Partly-Enc}:FAS is applied to a subset of data, with scrambling for security.
\end{itemize}

\textbf{Full Overhead} represents additional training time due to full encryption, while \textbf{Partly Overhead} captures the impact of FAS.

\begin{table}[!t]
\centering
\scriptsize
\caption{Training Time (Minutes) and Overheads Across Datasets}
\begin{tabular}{|p{1.7cm}|p{2.5cm}|p{1.4cm}|p{1.5cm}|}
\hline
\textbf{Model} & \textbf{Dataset} & \textbf{Full Enc (min)} & \textbf{Partly Enc (min)} \\ \hline
\multirow{5}{*}{EffNetB0}  
  & CIFAR-10    & 111  & 21   \\ \cline{2-4}
  & CT Kidney   & 228  & 127  \\ \cline{2-4}
  & Lung        & 355  & 254  \\ \cline{2-4}
  & COVID       & 698  & 591  \\ \cline{2-4}
  & Diabetic\newline Retinopathy & 1298 & 1351 \\ \hline
\multirow{5}{*}{DenseNet121}  
  & CIFAR-10    & 219  & 35   \\ \cline{2-4}
  & CT Kidney   & 514  & 326  \\ \cline{2-4}
  & Lung        & 506  & 312  \\ \cline{2-4}
  & COVID       & 1542 & 1351 \\ \cline{2-4}
  & Diabetic\newline Retinopathy & 1542 & 1351 \\ \hline
\multirow{5}{*}{MobileNetV2}  
  & CIFAR-10    & 68   & 15   \\ \cline{2-4}
  & CT Kidney   & 163  & 100  \\ \cline{2-4}
  & Lung        & 467  & 231  \\ \cline{2-4}
  & COVID       & 610  & 555  \\ \cline{2-4}
  & Diabetic\newline Retinopathy & 610  & 555  \\ \hline
\multirow{5}{*}{ResNet-50}  
  & CIFAR-10    & 530  & 68   \\ \cline{2-4}
  & CT Kidney   & 861  & 399  \\ \cline{2-4}
  & Lung        & 834  & 372  \\ \cline{2-4}
  & COVID       & 1140 & 673  \\ \cline{2-4}
  & Diabetic\newline Retinopathy & 1140 & 673  \\ \hline
\end{tabular}
\label{Table:Training Time and Overhead Across Datasets}
\end{table}

Table~\ref{Table:Training Time and Overhead Across Datasets} presents training times (minutes) across datasets. Fully encrypted models require significantly more training time, with ResNet-50 showing the highest overhead. In contrast, FAS reduce overhead substantially while maintaining security.

\begin{figure}[!t]
    \centering
    \includegraphics[width=\linewidth]{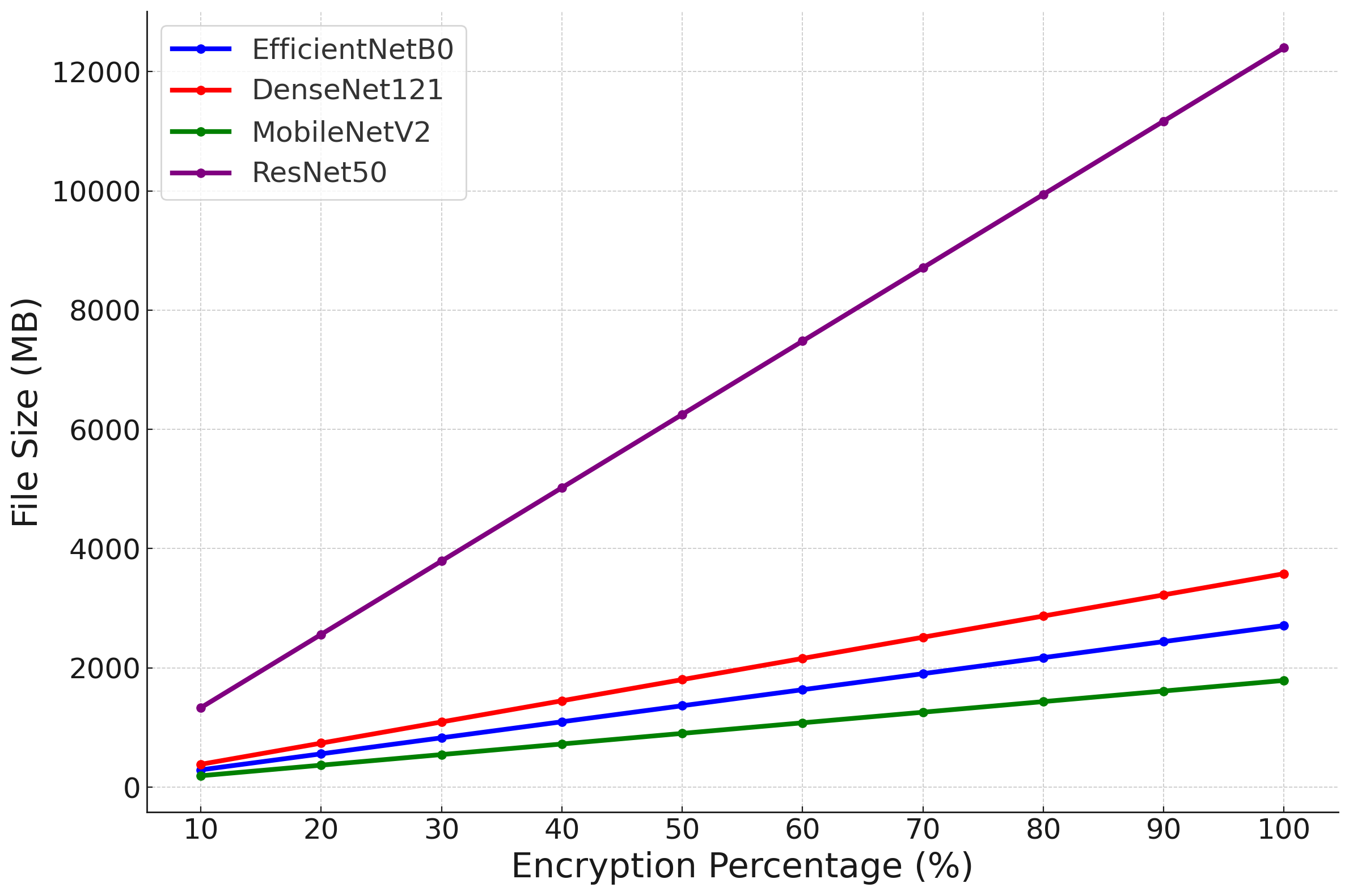}
    \caption{Encrypted file size per encryption level (10\% - 100\%)}
    \label{fig:size_graph}
\end{figure}

Figure~\ref{fig:size_graph} shows communication costs across encryption levels, revealing higher costs for fully encrypted models, whereas FAS optimizes performance.

\begin{figure}[!t]
    \centering
    \includegraphics[width=\linewidth]{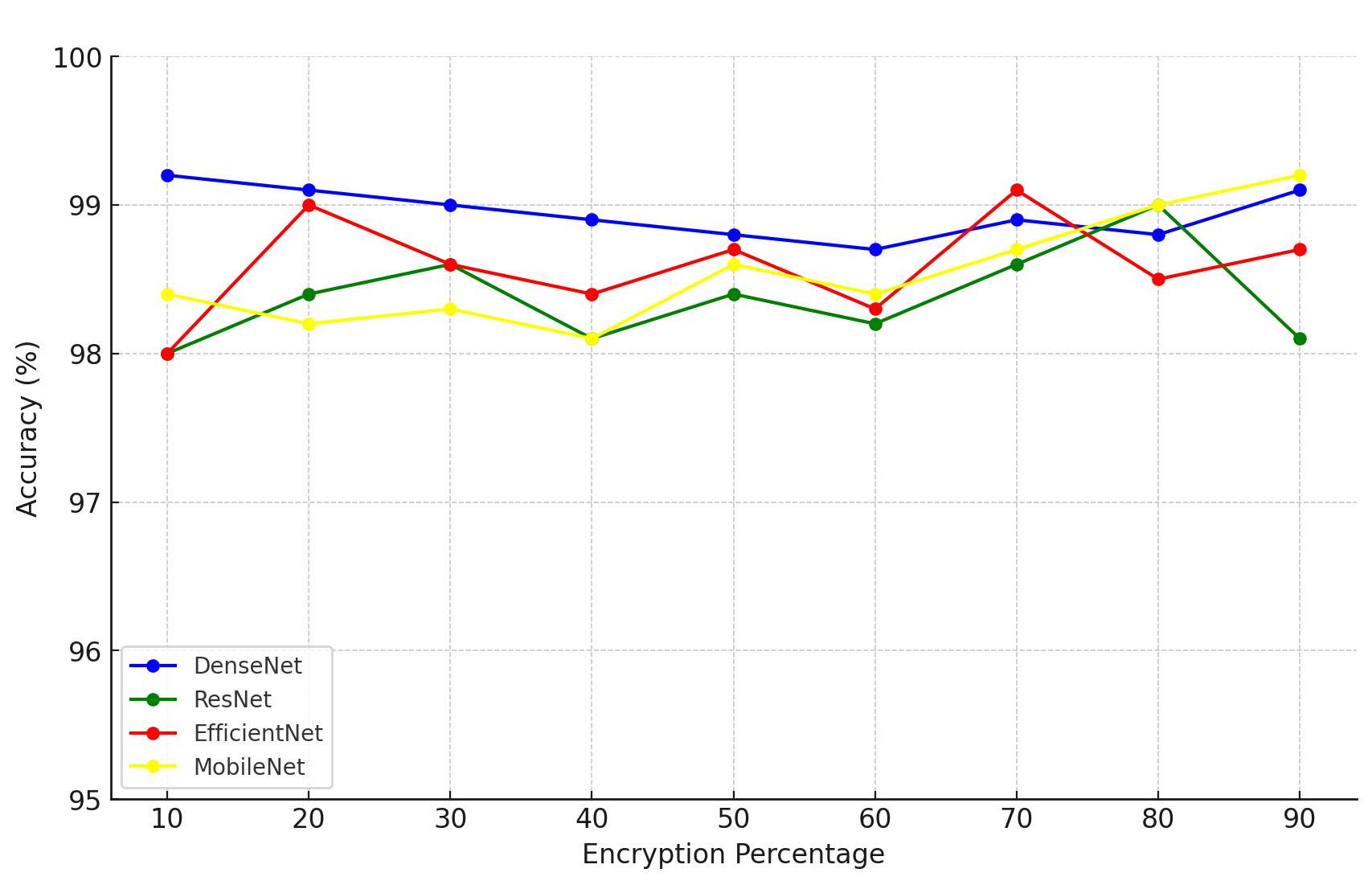} 
    \caption{Accuracy Vs Encryption Percentage Across Different Models}
    \label{fig:accuracy_comparison}
\end{figure}

Figure~\ref{fig:accuracy_comparison} indicates that accuracy remains stable across encryption percentages, confirming the effectiveness of FAS in preserving model performance.

\begin{figure}[!t]
    \centering
    \includegraphics[width=\linewidth]{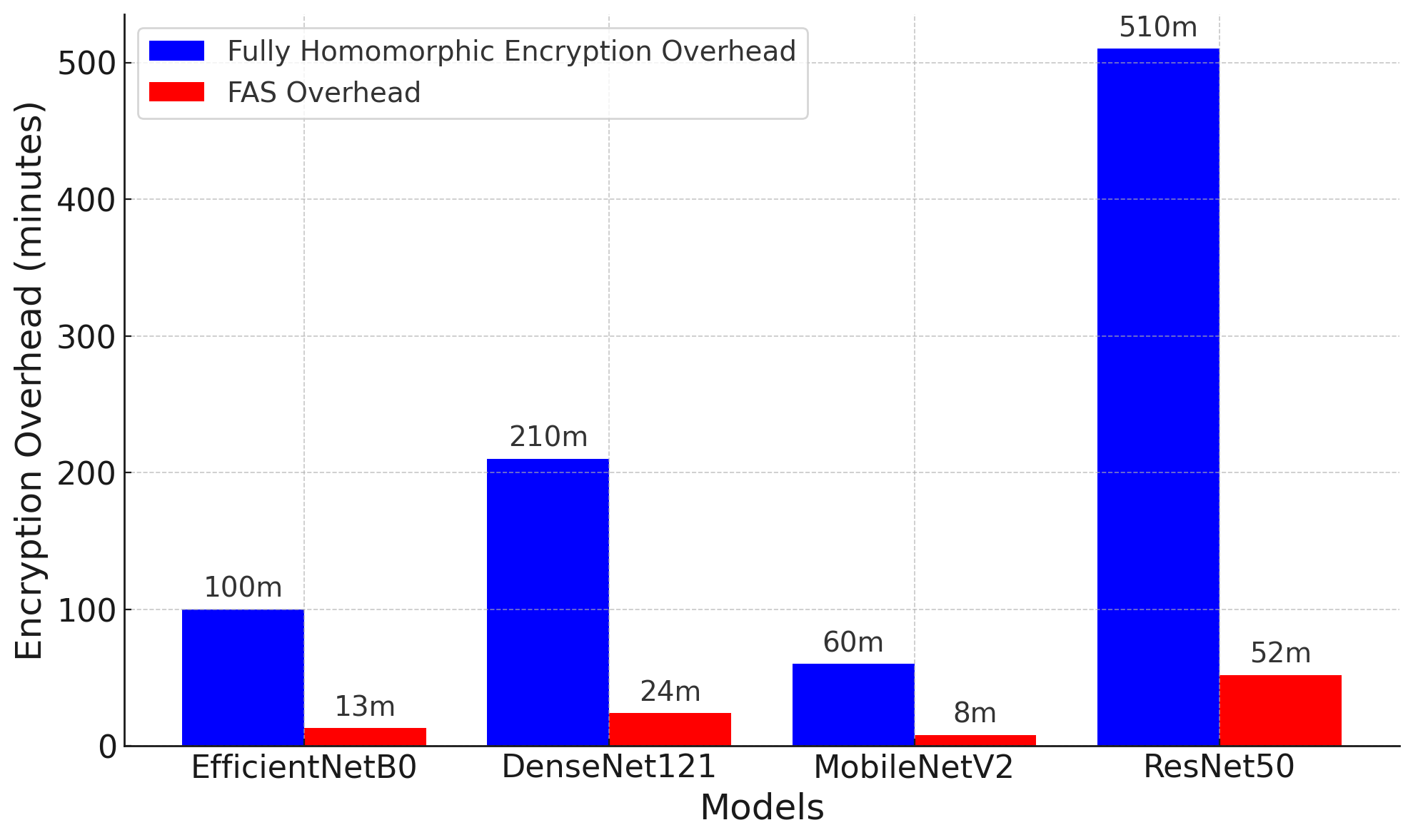} 
    \caption{Training Time and Encrypted Size Comparison for FHE and FAS Models (CT Kidney dataset)}
    \label{fig:encryption_comparison}
\end{figure}

Figure~\ref{fig:encryption_comparison} highlights significant reductions in encryption overhead when using FAS, making FAS a viable option for resource-limited FL scenarios.

\subsection{Effects on Security}

Before conducting our main security experiments, we implemented several security tests to evaluate the robustness of our encryption methods. These tests included integrity checks, timing attack tests, noise and precision evaluations, and simulations of chosen-ciphertext and chosen-plaintext attacks. Additional differential attack tests and statistical distribution checks were performed to verify that encryption preserved data integrity and privacy. Our findings indicate that 10\% FAS encryption successfully passed the same security tests as 100\% FHE, demonstrating strong security with reduced computational overhead.

\begin{table}[!t]
\scriptsize
\caption{FHE vs FAS: Security Test Results}
\centering
\begin{tabular}{|l|c|c|}
\hline
\textbf{Test} & \textbf{FAS} & \textbf{FHE} \\ \hline
Integrity Check              & False & False \\ \hline
Chosen-Plaintext Attack       & True  & True  \\ \hline
Chosen-Ciphertext Attack      & True  & True  \\ \hline
Noise and Precision Test     & True  & True  \\ \hline
Timing Attack Test           & True  & True  \\ \hline
Differential Attack Test     & True  & True  \\ \hline
Statistical Dist. Test       & True  & True  \\ \hline
Homomorphism Test            & True  & True  \\ \hline
Leakage Test                 & True  & True  \\ \hline
\end{tabular}
\label{Table:FHE vs FAS: Security Test Results}
\end{table}

Table~\ref{Table:FHE vs FAS: Security Test Results} compares FAS with FHE across a series of security evaluations. The results confirm that FAS provides comparable protection, reinforcing its viability as a lightweight yet effective security mechanism.

\subsubsection{Evaluating Privacy with MSSIM and VIFP}

Model inversion attacks pose a threat to FL by reconstructing training data from model updates. To assess privacy resilience, we use Mean Structural Similarity Index (MSSIM) and Visual Information Fidelity in the Pixel domain (VIFP).

MSSIM quantifies structural similarity between original and reconstructed images, with lower scores indicating stronger privacy. VIFP measures visual quality based on perceptual models, highlighting the impact of noise and scrambling techniques. Both metrics are widely recognized for privacy evaluation in adversarial scenarios.

Our results show that selective encryption with scrambling and noise significantly degrades reconstruction quality. MSSIM stabilizes at 52\% for 20\% encryption, and VIFP follows a similar trend across multiple encryption levels. This confirms that our approach effectively mitigates inversion attacks while optimizing computational efficiency.

Further tests demonstrated that even at 10\% encryption, reconstructed images showed significant distortion. The MSSIM score at 10\% encryption was 58\%, highlighting a substantial deviation from the original images. As encryption increased, the score stabilized at around 52\%, reinforcing the effectiveness of partial encryption in maintaining privacy.

\begin{figure}[!t]
    \centering
    \includegraphics[width=\linewidth]{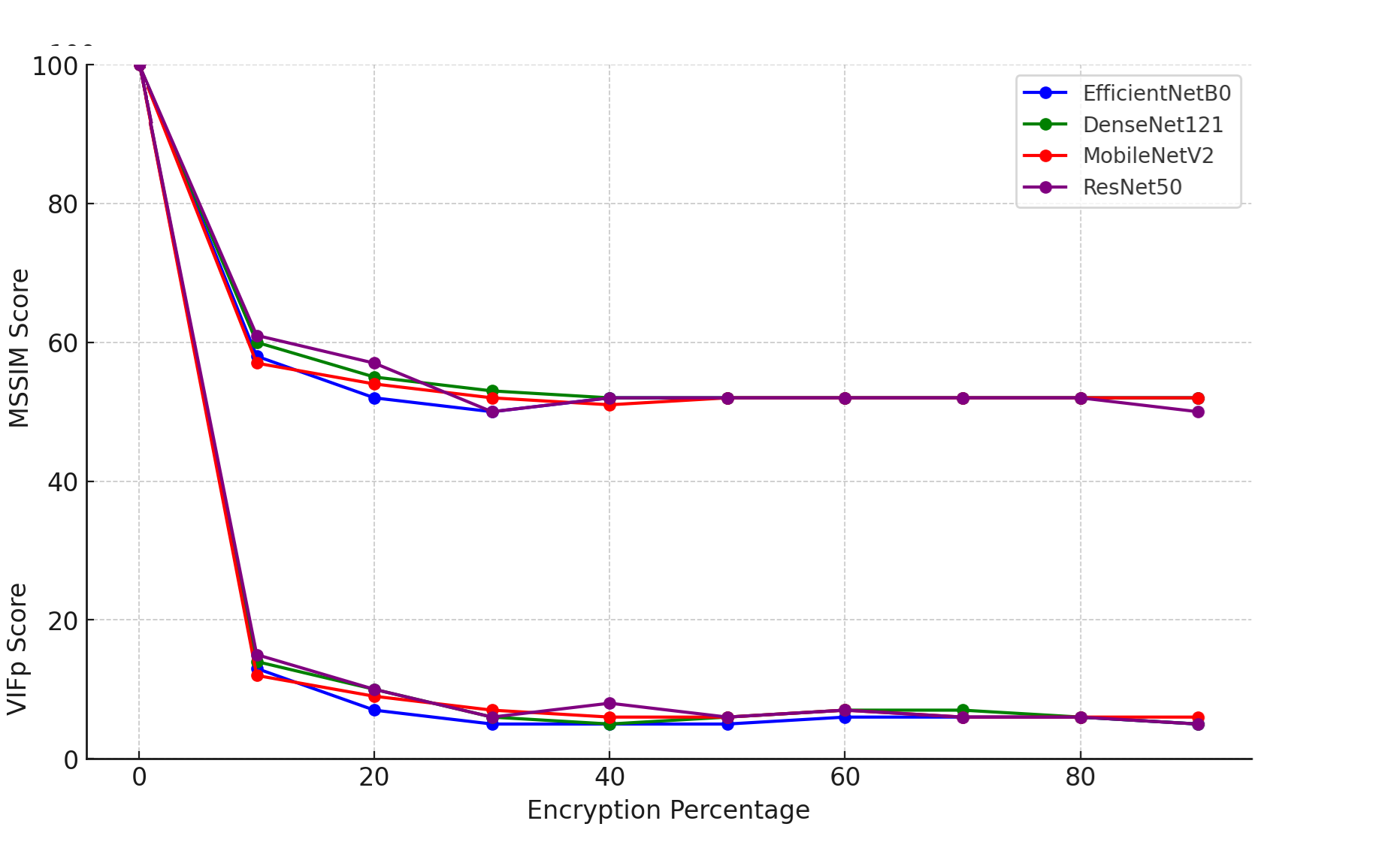}
    \caption{MSSIM and VIFP Scores across Encryption Percentages for Various Models}
    \label{fig:mssim_graph}
\end{figure}

Figure~\ref{fig:mssim_graph} illustrates MSSIM and VIFP scores across encryption percentages for various models. Results were averaged across multiple runs to account for variations introduced by selective encryption, scrambling, and noise. The sharp decline in scores at 10\% encryption, with stabilization around 20\%, suggests that lower encryption levels can provide security equivalent to full encryption while significantly reducing computational costs. Notably, resistance to inversion attacks at 100\% encryption is nearly identical to that at 10\% and 20\%, highlighting the robustness of selective encryption.

\subsection{Cross-testing with Related Work}

\subsubsection{Comparison with MASKCRYPT}

MASKCRYPT \cite{10506637} applies a gradient-guided encryption mask to selectively encrypt critical model weights, reducing communication overhead by up to 4.15× compared to full model encryption. However, its gradient mask is not always optimized for selecting the most sensitive gradients, leading to lower initial security scores. MASKCRYPT requires multiple rounds to stabilize encryption, as reflected in VIFP scores, which only stabilize by round 5. This indicates an adaptation period where security is suboptimal.

In contrast, FAS stabilizes earlier by combining selective homomorphic encryption, differential noise, and bitwise scrambling, encrypting a fixed percentage of weights without gradient sensitivity analysis or pre-training. This approach reduces processing time by 90\% compared to FHE while maintaining strong resistance to model inversion attacks. Unlike MASKCRYPT, which recalculates sensitivity masks each round, FAS eliminates extra computational costs and scales effectively across federated networks.

\subsubsection{Comparison with FedML-HE}

FedML-HE \cite{jin2024fedmlheefficienthomomorphicencryptionbasedprivacypreserving} identifies and encrypts critical weights based on gradient sensitivity through a pre-training phase. Our recreated FedML-HE model provides insight into its behavior, though results may differ from the original implementation. While FedML-HE offers strong privacy protection, its pre-training phase introduces significant time and resource costs. Each client generates an encryption mask independently, which, while avoiding direct mask sharing, can lead to inconsistencies in aggregated model accuracy.

Our FAS method eliminates the need for pre-training and mask aggregation by encrypting a fixed percentage of weights with differential noise and bitwise scrambling. At 10\% encryption, FAS achieves a MSSIM score of 58\%, comparable to FedML-HE’s 52\%, but with significantly lower computational costs. The lack of pre-training and per-client mask generation allows FAS to maintain efficiency and scalability in FL environments.

\begin{figure}[!t]
    \centering
    \includegraphics[width=\linewidth]{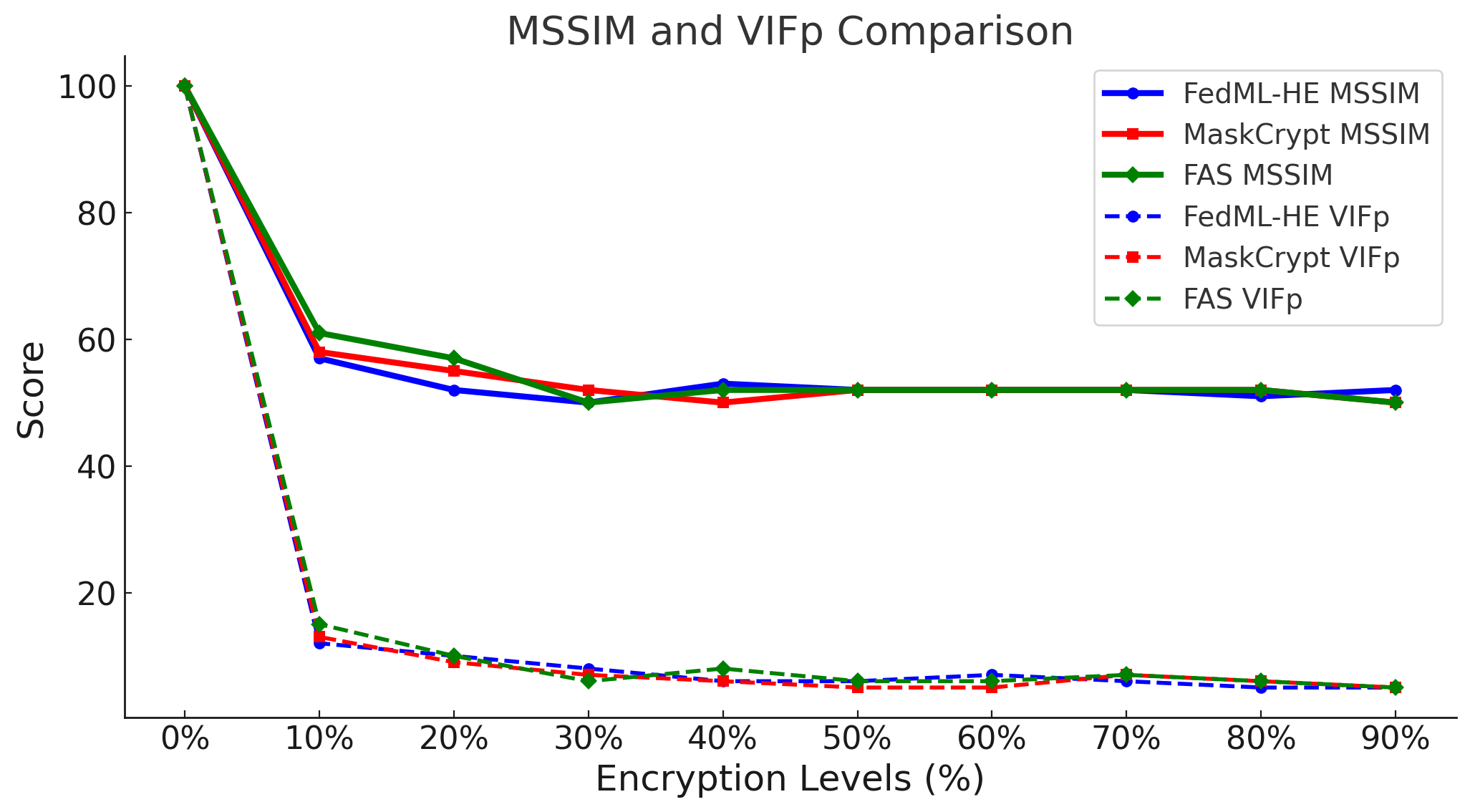}
    \caption{Comparison of MSSIM and VIFP Scores: FAS vs. FedML-HE vs. MASKCRYPT}
    \label{fig_mssim_vifp_comparison}
\end{figure}

Figure~\ref{fig_mssim_vifp_comparison} compares FAS, FedML-HE, and MASKCRYPT across MSSIM and VIFP scores. At 10\% encryption, FedML-HE shows slightly higher stability (MSSIM 52\%), while FAS achieves 58\%, and MASKCRYPT falls in between at 55\%. As encryption increases to 20\%, all methods converge with nearly identical security performance. Despite slight advantages for FedML-HE in lower encryption levels, FAS provides comparable security at significantly lower computational costs. MASKCRYPT’s need for sensitivity mask recalculation each round makes it less efficient for real-time scenarios. These findings position FAS as an optimal choice for scalable, privacy-sensitive FL applications requiring low-latency performance.

\begin{table}[!t]
\centering
\caption{Comparison of Encryption Techniques for Different Models}
\begin{tabular}{|p{2cm}|p{1.8cm}|p{0.5cm}|p{0.7cm}|p{1.1cm}|}
\toprule
   Method &       Model &  Non Enc &  Total (m) &   Overhead (m) \\
\midrule
      FAS &    EffNetB0 &       56 &          69 &               13 \\
 FEDML-HE &    EffNetB0 &       56 &          99 &               43 \\
MASKCRYPT &    EffNetB0 &       56 &          89 &               33 \\
      FAS & DenseNet121 &      152 &         176 &               24 \\
 FEDML-HE & DenseNet121 &      152 &         210 &               58 \\
MASKCRYPT & DenseNet121 &      152 &         200 &               48 \\
      FAS & MobileNetV2 &       46 &          52 &                6 \\
 FEDML-HE & MobileNetV2 &       46 &          69 &               23 \\
MASKCRYPT & MobileNetV2 &       46 &          64 &               18 \\
      FAS &   ResNet-50 &      176 &         224 &               48 \\
 FEDML-HE &   ResNet-50 &      176 &         278 &              102 \\
MASKCRYPT &   ResNet-50 &      176 &         246 &               70 \\
\bottomrule
\end{tabular}
\label{tab:comparison_table}
\end{table}

Table~\ref{tab:comparison_table} compares non-encrypted, partly encrypted, and overhead times for different models using FAS, FedML-HE, and MASKCRYPT on the Kidney dataset. The values are reconstructed based on methodologies described in the respective papers. FAS consistently has the lowest overhead across models, making it the most efficient. FedML-HE has the highest overhead due to pre-training, while MASKCRYPT offers a balance but initially has lower security, improving after multiple rounds. Overall, FAS is the best option for efficiency and scalability in FL.

\subsection{Comparative Analysis of Encryption Techniques Across Datasets}

This section compares the performance of FAS, FEDML-HE, and MASKCRYPT across five datasets: CIFAR-10, Diabetic Retinopathy, COVID, Lung, and Kidney. The analysis highlights FAS’s efficiency in minimizing overhead and encryption time compared to the other methods.

Across all datasets, FAS consistently achieves the lowest training times and computational overhead as we can see from Figures~\ref{fig_cifar_comparison}, ~\ref{fig_dia_comparison}, ~\ref{fig_kidney_comparison}, ~\ref{fig_lung_comparison}, ~\ref{fig_covid_comparison}. Unlike FEDML-HE, which incurs significant pretraining costs for sensitivity mask creation, and MASKCRYPT, which generates masks at each round, FAS employs a random masking strategy with minimal computation. This advantage is particularly evident in lightweight models like MobileNetV2 on the COVID dataset and larger models like ResNet-50 on the Lung dataset. The efficiency of FAS is further demonstrated in the Diabetic Retinopathy and Kidney datasets, where it eliminates pretraining and per-round computations, making it the most practical and scalable approach. Overall, FAS outperforms the other techniques across all datasets, confirming its robustness and adaptability.

\begin{figure}[!t]  
\centering
\includegraphics[width=\linewidth]{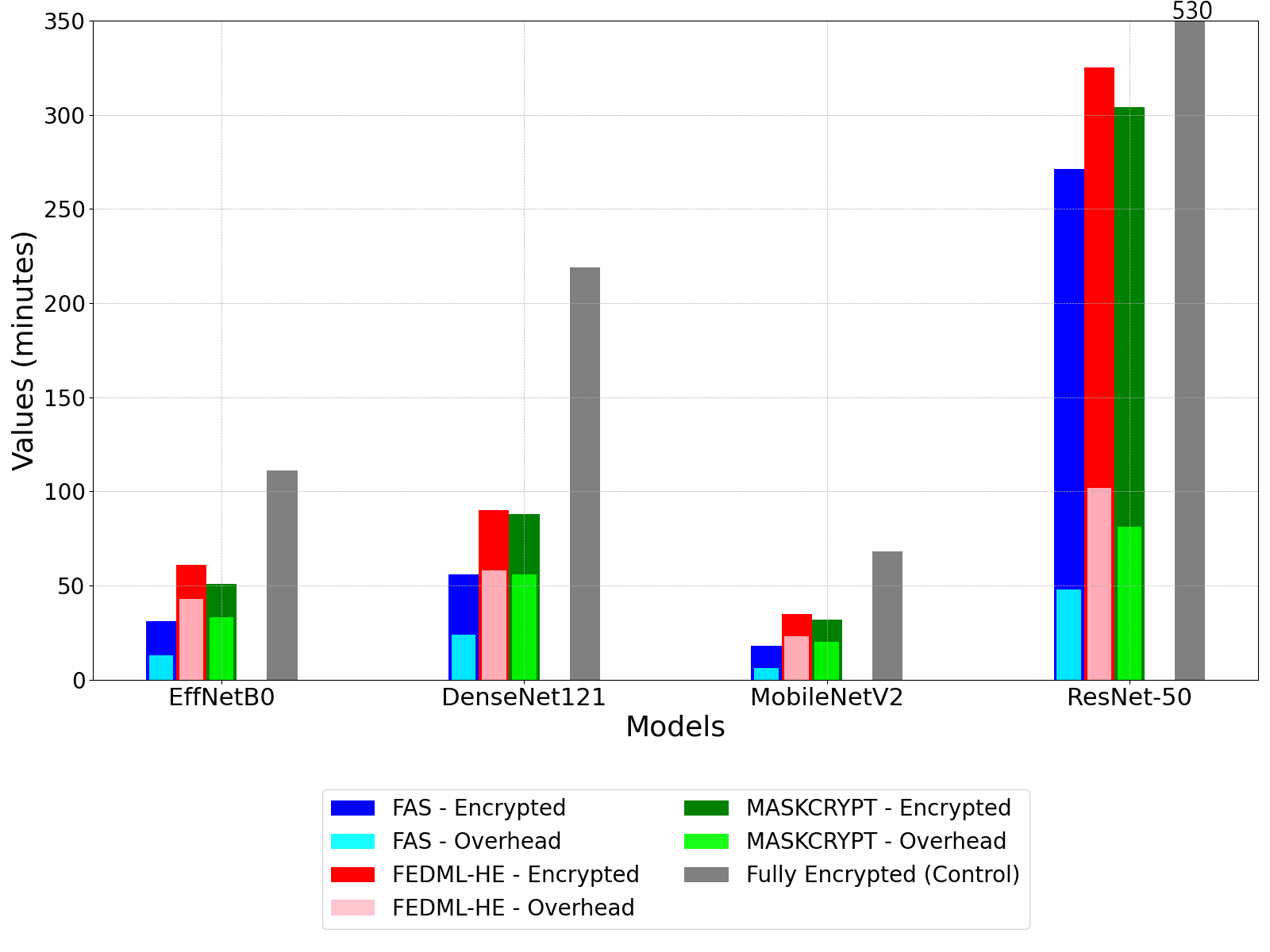}
\caption{Comparison of Partly Encrypted and Fully Encrypted Metrics Across Models (CIFAR-10).}
\label{fig_cifar_comparison}
\end{figure}

\begin{figure}[!t]
\centering
\includegraphics[width=\linewidth]{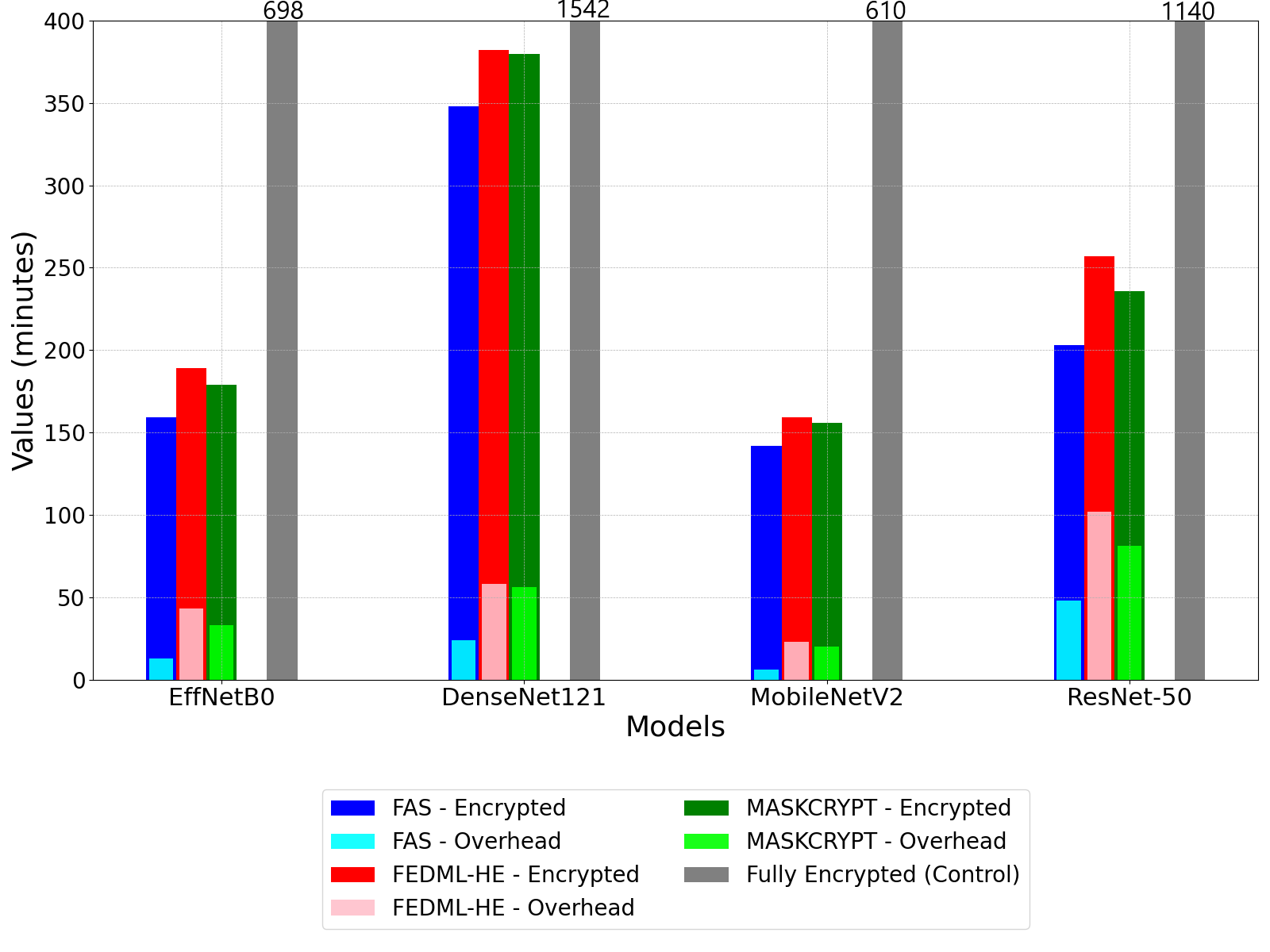}
\caption{Comparison of Partly Encrypted and Fully Encrypted Metrics Across Models (Diabetic Retinopathy).}
\label{fig_dia_comparison}
\end{figure}

\begin{figure}[!t]
\centering
\includegraphics[width=\linewidth]{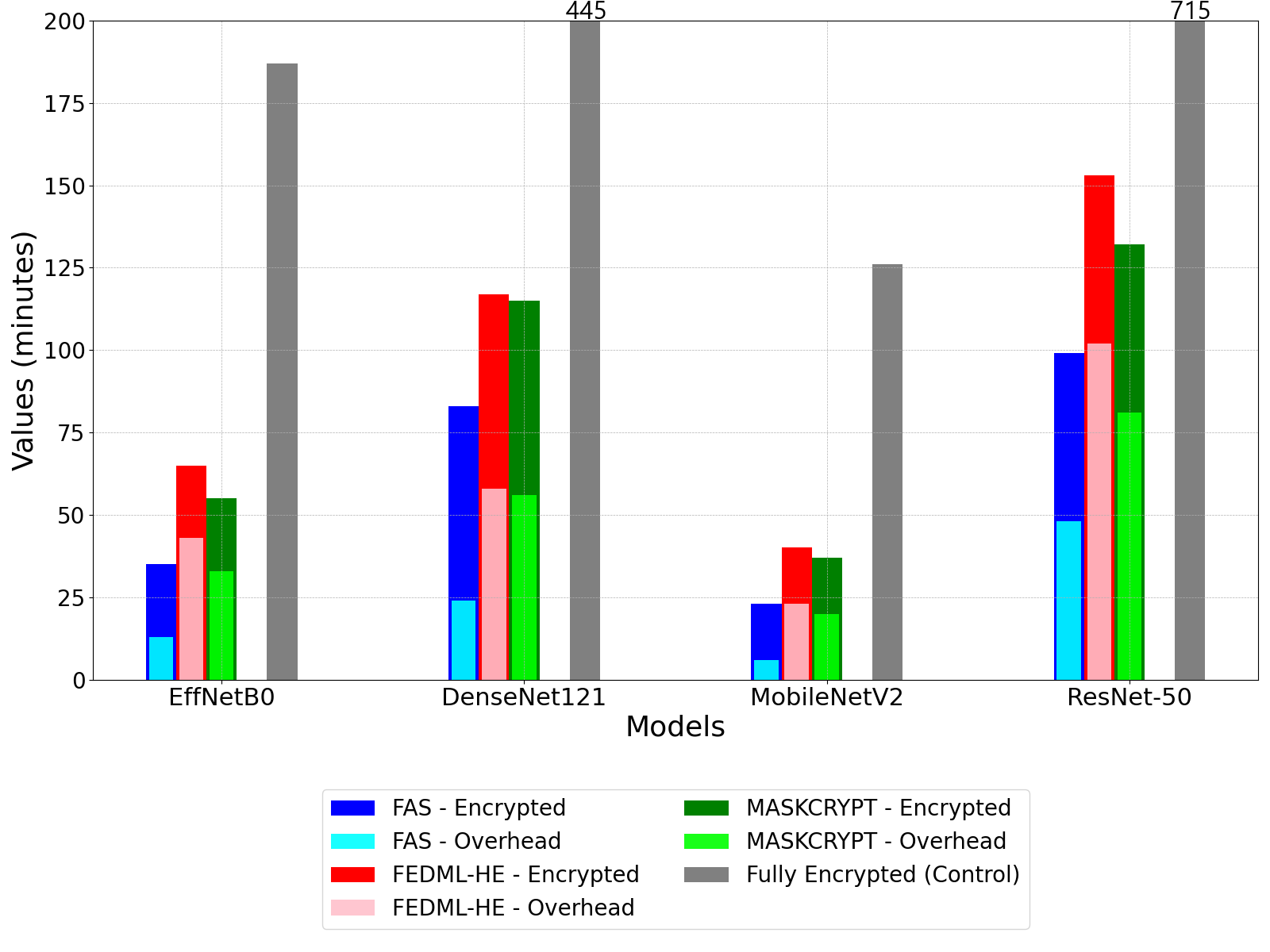}
\caption{Comparison of Partly Encrypted and Fully Encrypted Metrics Across Models (COVID).}
\label{fig_covid_comparison}
\end{figure}

\begin{figure}[!t]
\centering
\includegraphics[width=\linewidth]{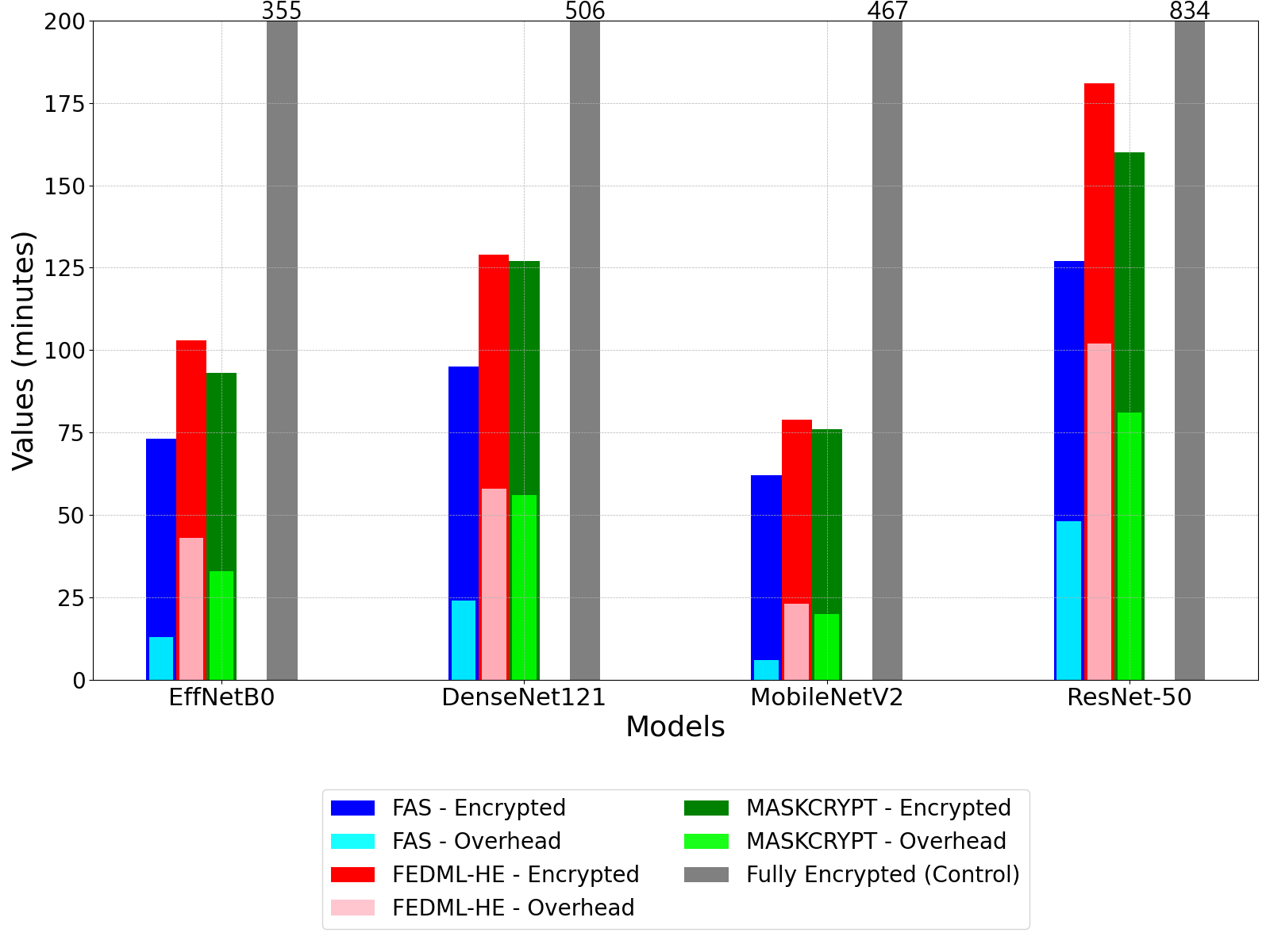}
\caption{Comparison of Partly Encrypted and Fully Encrypted Metrics Across Models (Lung).}
\label{fig_lung_comparison}
\end{figure}

\begin{figure}[!t]
\centering
\includegraphics[width=\linewidth]{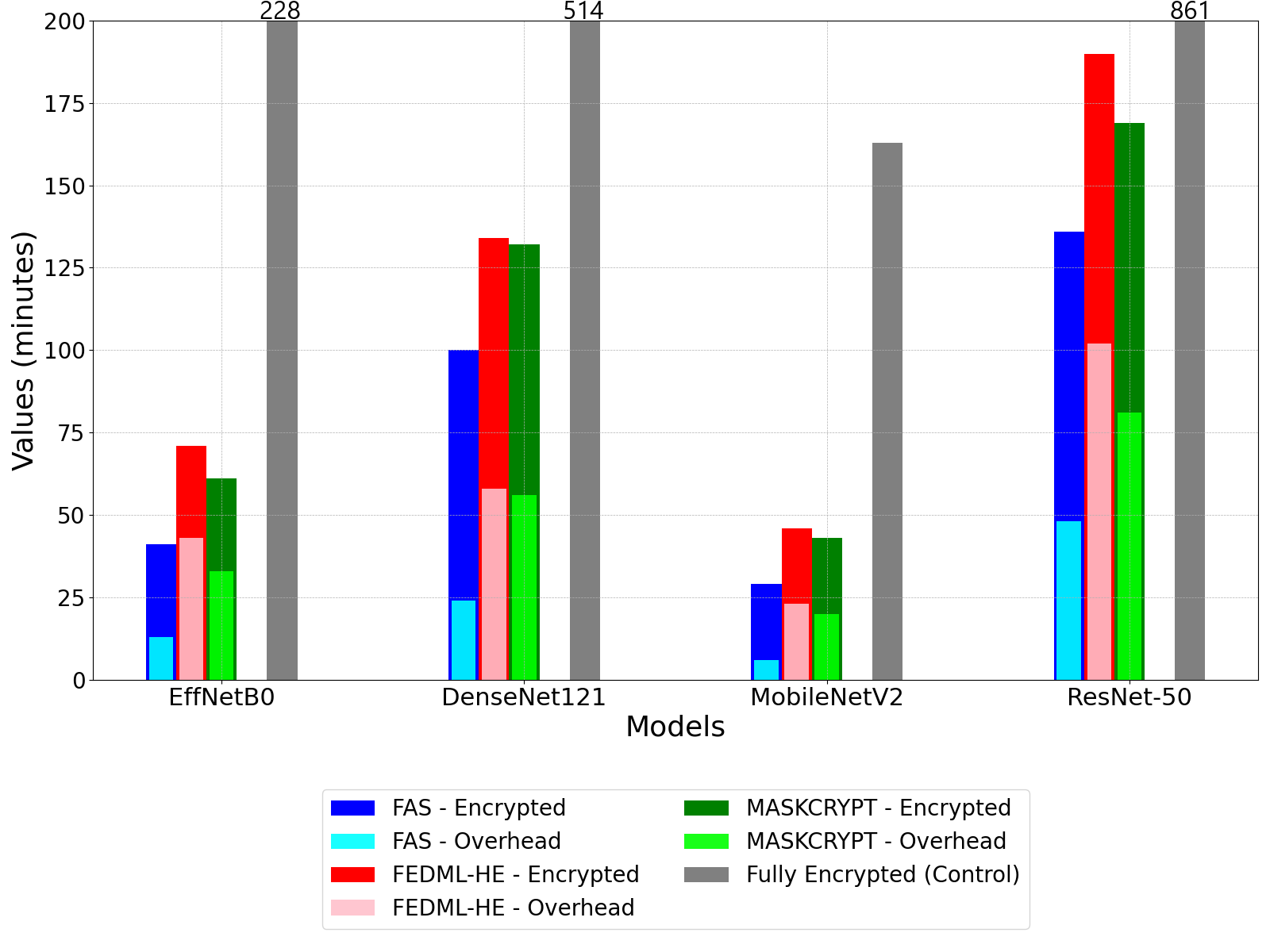}
\caption{Comparison of Partly Encrypted and Fully Encrypted Metrics Across Models (Kidney).}
\label{fig_kidney_comparison}
\end{figure}

Figures \ref{fig_cifar_comparison}, \ref{fig_covid_comparison}, \ref{fig_dia_comparison}, \ref{fig_kidney_comparison} and \ref{fig_lung_comparison} shows that across all datasets, FAS consistently outperforms FEDML-HE and MASKCRYPT in terms of total training time and overhead. The absence of sensitivity mask pretraining or extra per-round computations is the key factor contributing to its superior performance. These results establish FAS as the optimal choice for applications requiring fast and efficient encryption without compromising security. We can see that in smaller datasets which training time is not significant, difference between encryption technique becomes more obvious.
\setlength{\tabcolsep}{4pt} % slightly tighter column padding
\renewcommand{\arraystretch}{1.2} % better row spacing for Springer
\begin{table}[h]
    \centering
    \caption{Top 3 Cases Where FAS Outperforms Other Models}
    \begin{tabular}{|p{1.1cm}|p{1.8cm}|p{1.6cm}|p{0.9cm}|p{1.4cm}|}
        \hline
        \textbf{Dataset} & \textbf{Model} & \textbf{Compared} & \textbf{Enc. Impr. (\%)} & \textbf{Overhead Impr. (\%)} \\
        \hline
        COVID & MobileNetV2 & FEDML-HE & 42.50 & 73.91 \\
        COVID & EffNetB0 & FEDML-HE & 46.15 & 69.77 \\
        COVID & EffNetB0 & MASK-CRYPT & 36.36 & 60.61 \\
        \hline
    \end{tabular}
    \label{table-top3}
\end{table}

Table \ref{table-top3} presents the top three cases where the FAS encryption model demonstrates a higher efficiency compared to FEDML-HE and MASKCRYPT in terms of encryption time and overhead reduction.

COVID dataset with MobileNetV2: FAS achieves a 42.50\% faster encryption time and a 73.91\% lower overhead than FEDML-HE.
COVID dataset with EffNetB0: FAS provides a 46.15\% encryption improvement and a 69.77\% overhead reduction compared to FEDML-HE.
COVID dataset with EffNetB0 (vs. MASKCRYPT): FAS reduces encryption time by 36.36\% and overhead by 60.61\%.
These results suggest that FAS is particularly effective in reducing computational overhead while maintaining faster encryption speeds, especially in the COVID dataset with lightweight models.

\subsection{Handling Data Skew in Encryption-Based Techniques}

This section investigates how data skew affects different encryption techniques, with an emphasis on evaluating the robustness of the proposed FAS approach in comparison to existing methods like MASKCRYPT and Fedml-HE.

\subsubsection{Challenges with Mask-Based Techniques}
Both MASKCRYPT and Fedml-HE rely on sensitive masks to select important gradients for encryption.  that strongly affect how well the model performs. However, in scenarios with skewed data distributions, the importance of gradients becomes difficult to determine accurately. As a result, the encryption decisions made by these methods often degrade to levels resembling random encryption. This reduces their effectiveness, as observed in the MSSIM and VIFP score comparisons.

\subsubsection{Robustness of the FAS Technique}
In contrast, the proposed FAS technique, which operates as a random encryption method, is unaffected by data skew. Unlike mask-based techniques, FAS does not depend on the model's accuracy or gradient importance for its encryption process. Instead, it employs a combination of encryption, scrambling, and noise addition. These components ensure that the method maintains consistent performance irrespective of the underlying data distribution. The robustness of this approach makes it particularly suitable for scenarios where data skew is prevalent, such as FL environments with heterogeneous clients.

\subsubsection{Experimental Results and Comparisons}

\begin{table}[h]
    \centering
    \begin{tabular}{|p{1.3cm}|p{2cm}|p{1.7cm}|p{1.7cm}|}
        \hline
        \textbf{Dataset} & \textbf{Method} & \textbf{MSSIM (Skew/\newline Normal)} & \textbf{VIFP (Skew/\newline Normal)} \\
        \hline
        \multirow{4}{*}{Kidney} 
        & Fedml-HE & 63 / 57 & 18 / 13 \\
        & FAS & 62 / 61 & 16.5 / 15 \\
        & MASKCRYPT & 63 / 60 & 18 / 15 \\
        & Control & 70 / 70 & 20 / 20 \\
        \hline
        \multirow{4}{*}{Lung} 
        & Fedml-HE & 65 / 59 & 19 / 14 \\
        & FAS & 62 / 61 & 16.5 / 15 \\
        & MASKCRYPT & 66 / 62 & 19 / 16 \\
        & Control & 70 / 70 & 20 / 20 \\
        \hline
        \multirow{4}{*}{COVID} 
        & Fedml-HE & 66 / 60 & 20 / 15 \\
        & FAS & 62 / 61 & 16.5 / 15 \\
        & MASKCRYPT & 67 / 63 & 20 / 17 \\
        & Control & 70 / 70 & 20 / 20 \\
        \hline
        \multirow{4}{*}{\parbox{2cm}{Diabetic\\Retinopathy}} 
        & Fedml-HE & 64 / 58 & 18 / 14 \\
        & FAS & 62 / 61 & 16.5 / 15 \\
        & MASKCRYPT & 65 / 61 & 19 / 16 \\
        & Control & 70 / 70 & 20 / 20 \\
        \hline
        \multirow{4}{*}{CIFAR-10} 
        & Fedml-HE & 62 / 58 & 17 / 13 \\
        & FAS & 61 / 59 & 16.5 / 15 \\
        & MASKCRYPT & 64 / 59 & 18 / 14 \\
        & Control & 70 / 70 & 20 / 20 \\
        \hline
    \end{tabular}
    \caption{MSSIM and VIFP Scores for Skew and Normal Data across Datasets}
    \label{tab:results}
\end{table}

Table~\ref{tab:results} demonstrates that the proposed FAS technique, leveraging encryption, scrambling, and noise, exhibits superior robustness to data skew across all evaluated datasets—Kidney, Lung, COVID, and Diabetic Retinopathy—when compared to existing gradient-based encryption methods. This characteristic ensures that FAS maintains its performance and security advantages even in diverse, real-world scenarios with varying data distributions. Experimental results using the EfficientNetB0 model across these datasets show that the accuracy of FAS is minimally affected by data skew, while other methods experience significant performance degradation under similar conditions. These findings highlight the adaptability and reliability of the FAS technique in privacy-preserving machine learning applications.
\begin{itemize}
    \item For MASKCRYPT and Fedml-HE, the MSSIM and VIFP scores under skewed data are significantly degraded, closely resembling random encryption.
    \item In contrast, the FAS technique maintains reduced degragation across all conditions, as its random encryption approach is independent of the data distribution.
    \item The control value has been kept same with skewed and normal to show the random selective encryption comparison.
\end{itemize}

\subsubsection*{Effect of Data Skew on Training Performance}
\begin{figure}[!t]
    \centering
    \includegraphics[width=0.5\textwidth]{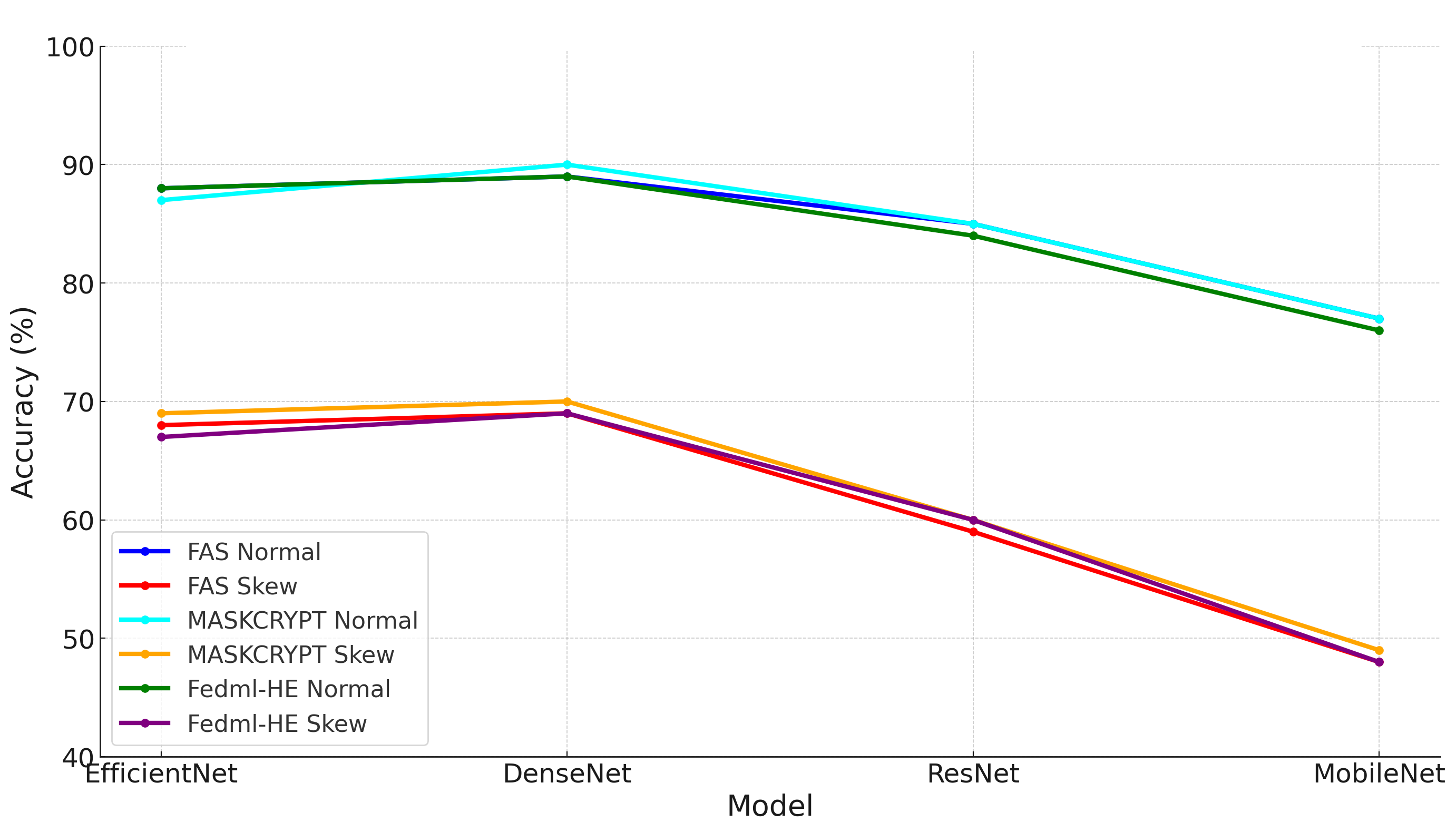}
    \caption{Effect of Data Skew on CIFAR Dataset Across Different Techniques}
    \label{fig:cifar_skew_graph}
\end{figure}
Figure \ref{fig:cifar_skew_graph} illustrates the uniform reduction in accuracy across three techniques—FAS, MASKCRYPT, and Fedml-HE—when training on skewed data using the CIFAR dataset. While all techniques show similar accuracy drops under skewed conditions, the implications for their security mechanisms differ significantly.

\begin{itemize}
    \item \textbf{MASKCRYPT and Fedml-HE:} Both techniques rely on sensitivity masks to ensure data security. As training accuracy decreases under skewed conditions, the quality and reliability of these sensitivity masks are compromised, potentially leaving the system insecure until the mask stabilizes.
    \item \textbf{FAS:} Unlike the other techniques, FAS does not depend on sensitivity masks. This independence ensures that the MS-SSIM and VIFP scores of FAS remain stable, even when training on skewed data. The inherent robustness of FAS allows it to maintain its security guarantees regardless of data distribution.
\end{itemize}

These results highlight a critical advantage of FAS: its ability to decouple data security from training accuracy. While longer training epochs can partially recover accuracy for MASKCRYPT and Fedml-HE, their reliance on sensitivity masks introduces a window of vulnerability during the calibration phase. In contrast, FAS maintains consistent security and performance, making it a more robust choice under challenging data conditions.

\section{Conclusion} 

This paper evaluated privacy-preserving techniques in FL, focusing on different datasets and comparing differential privacy, HE, and our custom FAS approach.

FHE offers the highest data protection but incurs significant computational costs, making it suitable only for scenarios prioritizing confidentiality over performance. Differential privacy provides lightweight privacy with minimal impact on computation, ideal for moderate security requirements. Our FAS method strikes a balance by achieving security comparable to full encryption while significantly reducing training time and overhead, making it efficient for large-scale FL and resource-constrained environments.

FAS's layered approach—combining selective encryption, bitwise scrambling, and differential noise—demonstrates strong resilience against model inversion attacks without requiring pre-training or complex mask aggregation, outperforming FedML-HE and MASKCRYPT in scalability and efficiency. FAS offers slighly better security compared to models that require accurate sensitivity masks for data skews or operate under general  low-accuracy conditions.

In summary, FAS offers an effective middle ground, balancing security and performance for real-time, privacy-sensitive applications like healthcare. Future research will refine these techniques and explore hybrid approaches across diverse datasets and federated environments to enhance scalability and applicability.

\section{Acknowledgments}
The first author (A. K.) was supported by the Republic of Türkiye.

\bibliographystyle{spmpsci}

\end{document}